\documentclass{article}

\usepackage{microtype}
\usepackage{graphicx}
\usepackage{subcaption}
\usepackage{booktabs} 

\usepackage{hyperref}

\usepackage[accepted]{icml2026}
\usepackage{amsmath}
\usepackage{amssymb}
\usepackage{mathtools}
\usepackage{amsthm}

\usepackage[capitalize,noabbrev]{cleveref}

\theoremstyle{plain}
\newtheorem{theorem}{Theorem}[section]
\newtheorem{proposition}[theorem]{Proposition}
\newtheorem{lemma}[theorem]{Lemma}
\newtheorem{corollary}[theorem]{Corollary}
\theoremstyle{definition}
\newtheorem{definition}[theorem]{Definition}
\newtheorem{assumption}[theorem]{Assumption}
\theoremstyle{remark}
\newtheorem{remark}[theorem]{Remark}

\usepackage[textsize=tiny]{todonotes}

\usepackage{amsmath}

\usepackage{tcolorbox}
\usepackage{amssymb}
\usepackage{bm}
\usepackage{nicefrac}
\usepackage{booktabs}
\usepackage{array}
\usepackage{multirow}
\usepackage{threeparttable}
\usepackage{makecell}
\usepackage{siunitx}
\usepackage{stfloats}
\usepackage{graphicx}
\usepackage{hyperref}
\usepackage{enumerate}
\usepackage{enumitem}

\def\norm#1{\left\| #1 \right\|}
\def\calG{\mathcal{G}}

\def\calF{\mathcal{F}}
\def\calR{\mathcal{R}}

\def\norm#1{\left\| #1 \right\|}

\def\norm#1{\| #1 \|}

\def\norm#1{\| #1 \|}

\newcommand{\removelatexerror}{\let\@latex@error\@gobble}

\newcommand\LL{\bm{\mathit{L}}}

\newcommand\zz{\boldsymbol{\mathit{z}}}

\newcommand\cc{\boldsymbol{\mathit{c}}}

\newcommand\ee{\boldsymbol{\mathit{e}}}

\newcommand\qq{\boldsymbol{\mathit{q}}}

\newcommand\sss{\boldsymbol{\mathit{s}}}
\newcommand\hh{\boldsymbol{\mathit{h}}}

\renewcommand\AA{\boldsymbol{\mathit{A}}}

\newcommand\DD{\boldsymbol{\mathit{D}}}

\newcommand\PP{\boldsymbol{\mathit{P}}}

\newcommand\TT{\boldsymbol{\mathit{T}}}

\newcommand\QQ{\boldsymbol{\mathit{Q}}}

\newcommand\II{\boldsymbol{\mathit{I}}}

\newcommand{\abs}[1]{\lvert #1 \rvert}

\icmltitlerunning{Fast Estimation for Forest Matrix of Signed Graphs}

\begin{document}

\twocolumn[
  \icmltitle{Fast Estimation for Forest Matrix of Signed Graphs}



  \icmlsetsymbol{equal}{*}

\begin{icmlauthorlist}
  \icmlauthor{Haoxin Sun}{fudan}
  \icmlauthor{Zhongzhi Zhang}{fudan}
\end{icmlauthorlist}

\icmlaffiliation{fudan}{
  College of Computer Science and Artificial Intelligence,
  Fudan University,
  Shanghai, China}

\icmlcorrespondingauthor{Zhongzhi Zhang}{zhangzz@fudan.edu.cn}




  \icmlkeywords{Machine Learning, ICML}

  \vskip 0.3in
]



\printAffiliationsAndNotice{}  

\begin{abstract}

The forest matrix of a signed graph plays an important role in network science and social opinion dynamics, yet existing algorithms are mainly designed for unsigned graphs and are difficult to extend to signed graphs. In this paper, we study the problem of efficiently estimating the forest matrix of signed graphs with \(n\) nodes and introduce the signed forest matrix theorem, which establishes the relationship between generalized spanning converging forests and the forest matrix. Based on this result, we propose a novel algorithm \textsc{GSCF}, built on a variant of loop-erased random walks, to generate generalized spanning converging forests in expected \(O(n)\) time. We further develop two sampling algorithms, \textsc{FMDE} and \textsc{FMDE+}, for estimating the diagonal of the forest matrix, both with time complexity \(O(ln)\), where \(l\) is the number of samples. Extensive experiments on various signed graphs show that our methods achieve high estimation accuracy, significantly improve computational efficiency, and scale to graphs with over twenty million nodes. Our source code is publicly available on \url{https://github.com/HaoxinSun98/SignedForestDiagonal}.

\end{abstract}

\section{Introduction}


The forest matrix,  denoted as $\QQ = (\II+\LL)^{-1}$, where $\LL$ is the Laplacian matrix, is  a powerful tool in network science. Its properties and applications have been  studied in extensive studies such as~\cite{ChSh95,ChSh97,ChSh98,ChSh06}. In recent years, the scope of applications for the forest matrix and its variants has expanded significantly, influencing fields  such as  opinion dynamics~\cite{GiTeTs13,SuZh23,ZhSuXuLiZh24,XuBaZh21,NeDoPe24,sunfast}, graph signal processing~\cite{PiAmBaTr21,PiAmBaTr20} and Markov processes~\cite{AvLuGaAl18,AvCaGaMe18}. In particular, the diagonal entries of $\QQ $ are crucial and have recently been the subject of studies focusing on their efficient computation~\cite{JiBaZh19, GrAnPrMe21,SuZh24}. The entries of the forest matrix are also pivotal for determining the forest closeness centrality of networks~\cite{JiBaZh19, GrAnPrMe21} and have been closely associated with determinantal point processes in machine learning~\cite{KuTa12}. Additionally, they provide valuable insights through electrical interpretations in multi-agent and network-based problems~\cite{RoFrFa17}.

With the growing recognition of competitive interactions in real systems, signed graphs have attracted significant scholarly attention~\cite{HaBhPa24,SuWuChWaZhLi22,SuChWaZhWa20,XiOrGi20,TzOrGi20,SiAd17,FrJo90,CaFaHe25}. The introduction of negative edges modifies the properties and computational challenges related to the forest matrix in these graphs. For instance, the forest matrix in signed graphs is no longer row-stochastic,  and the conventional forest matrix theorem~\cite{ChSh06,ChSh98}, which links the forest matrix to spanning forests, no longer applies. The forest matrix is central to the signed Friedkin-Johnsen (FJ) model, an influential model in opinion dynamics that addresses both cooperative and antagonistic relationships, offering a nuanced view of human relational dynamics~\cite{XuHuWu20,RaHo21,HeZhLiRu20,HeZeZhLi22, TaChAgLi16,HaBhPa24}. Particularly, the diagonal elements of the forest matrix in the signed FJ model determine the weight each agent assigns to their initial opinions at equilibrium, with great significance in node ranking and centrality measures. In addition, the forest matrix elements are also closely related to the expressed opinions of the individuals in the signed FJ model, which is the basis for the study of opinion dynamics. However, existing algorithms~\cite{JiBaZh19, GrAnPrMe21,SuZh24} fail to effectively estimate the elements of the forest matrix in signed graphs due to these altered properties. Specifically, the methods proposed in~\cite{JiBaZh19, GrAnPrMe21} rely on fast Laplacian solvers~\cite{CoKyMiPaPeRaSu14}, which are not applicable to signed graphs. Additionally, the sampling method developed in~\cite{SuZh24} fails to run, as it relies on the forest matrix theorem for unsigned graphs, which does not hold in the signed case. Consequently, a theoretically guaranteed estimation algorithm for approximating the elements of the forest matrix  of signed graphs is imperative.

In this paper, we delve deeply into the problem of efficiently computing the forest matrix in signed digraphs with \(n\) nodes, aiming to address the challenges and limitations of existing algorithms. The primary contributions of this work are summarized as follows:

(i) We introduce a new forest matrix theorem specifically tailored for signed graphs. This theorem establishes the foundational relationship between generalized spanning converging forests and the forest matrix, and elucidates several key properties of the forest matrix in the context of signed graphs.

(ii) To generate a generalized spanning converging forest, we propose a novel algorithm, denoted as $\textsc{GSCF}$. This algorithm is based on a variant of the loop-erased random walk, and we demonstrate that its expected running time is \(O(n)\), making it highly efficient for practical applications.

(iii) We develop two rapid sampling algorithms, $\textsc{FMDE}$ and $\textsc{FMDE+}$, designed to estimate the diagonal of the forest matrix. Both algorithms operate with a time complexity of \(O(ln)\), where \(l\) is the number of samples. $\textsc{FMDE+}$, an enhancement over $\textsc{FMDE}$, incorporates additional information that results in superior theoretical and experimental performance. We also develop an algorithm \textsc{FJOE}, to estimate the expressed opinion of the signed FJ model as an application of our proposed methods.

(iv)  Through extensive experiments conducted on various signed graphs, we demonstrate that our algorithms not only achieve high estimation accuracy but also significantly enhance computational efficiency. Additionally, our approaches are scalable to massive graphs, effectively handling networks with more than twenty million nodes.

\section{Related Work}

The forest matrix is closely related to spanning rooted forests in graphs, as established by the forest matrix theorem~\cite{ChSh06,ChSh97,ChSh98}. Recent research has increasingly focused on computing quantities or solving optimization problems associated with the forest matrix and its variants. For instance, efforts have been made to compute the PageRank vector~\cite{LiLiDaChQiWa23PageRank, LiLiDaWa22}, solve linear systems in graph signal processing~\cite{PiAmBaTr21, PiAmBaTr20}, address optimization problems in opinion dynamics~\cite{SuZh23}, and estimate the trace of the forest matrix~\cite{PiAmBaTr22trace, PiAmBaTr22}. The algorithms developed for these problems are predominantly sampling-based, relying on the theoretical foundation of the forest matrix theorem and utilizing variants of Wilson's algorithm for loop-erased random walks to sample spanning trees or forests~\cite{Wi96}.

Efficient computation of the diagonal elements of the forest matrix has recently attracted significant interest due to its close association with issues such as forest closeness centrality of networks~\cite{JiBaZh19, GrAnPrMe21}, determinantal point processes in machine learning~\cite{KuTa12}, and multi-agent and network-based problems~\cite{RoFrFa17}. A nearly linear time algorithm combining the Johnson-Lindenstrauss lemma~\cite{JoLi84, Ac03} with a fast Laplacian solver was proposed in~\cite{JiBaZh19}. This was followed by an approach in~\cite{GrAnPrMe21} that integrated a single instance of the Laplacian solver with uniform spanning tree sampling. More recently, forest sampling algorithms introducing novel variance reduction techniques were developed, offering better theoretical guarantees than prior methods~\cite{SuZh24}.

 However, when applied to signed graphs, where the forest matrix remains central to many problems~\cite{HaBhPa24, LiChZh22, XuHuWu20, ZhSuXuLiZh24}, existing algorithms falter. This limitation stems from the fact that fast Laplacian solvers are not adaptable to signed contexts, and the traditional forest matrix theorem does not hold, rendering all forest sampling-based algorithms ineffective. Consequently, the introduction of a forest matrix theorem tailored for signed graphs, along with the development of a novel sampling-based method for efficiently estimating the diagonal of the forest matrix in such graphs, constitutes the primary focus of this paper.

\section{Preliminaries}

We define a directed   signed graph $\calG= (V,E,w)$ with $n=|V|$ nodes, $m=|E|$ edges,  where $V=\{v_1,v_2,\ldots,v_n\}$ is the set of nodes,  $E=\{(v_i, v_j)\in V \times V \}$ is the set of directed edges, and $w : E \mapsto \{ +1, -1\}$ is the edge weight function, with the weight of an edge $e=(i,j)$ denoted by $w_{ij}$.  We call an edge $e=(i,j) $ a positive (or negative) edge if   its weight $w_{ij}$ is $+1$ (or $-1$). The edge sign represents the relationship between node $i$ and node $j$, which can be cooperative or competitive. In what follows, $v_i$ and $i$ are used interchangeably to represent node $v_i$ if incurring no confusion. A path $P$ from node $v_1 $ to node $ v_j $ is an alternating sequence of nodes and edges $v_1$,$(v_1,v_2)$,$v_2$,$\ldots$, $v_{j-1},(v_{j-1}$,$v_j)$, $v_j$, where nodes are distinct. A loop  is a path plus an arc from the ending node to the starting node. 

For a directed signed graph $\calG=(V, E,w)$, let $C=\{1,2,\ldots, k\}$ be the cycle with nodes $1$ to $k$ and edges $(1,2),(2,3),\ldots,(k,1)$. A cycle with only one node is called a trivial cycle. A non-trivial cycle  is called negative (or positive) if the sign of the product of its arcs  is negative (or positive). The graph $\calG = (V, E, w)$ is defined as a balanced signed graph if either all its edges are positive or the vertices can be partitioned into two subsets such that each positive edge joins vertices in the same subset and each negative edge joins vertices in different subsets. Notably, balanced signed graphs do not contain any negative cycles.


Let $N(i) $ denote the set of nodes that can be accessed by node $ i $. In other words, $N(i) =\{ j: (i,j)\in E\}$. We define the degree of a node $i$ as $d_i=\sum_{j\in N(i)} |w_{ij}|$. We use a diagonal matrix $\DD=\text{diag}\{d_1,d_2,\ldots,d_n\}$ to denote the degree matrix, and matrix $\AA \in \mathcal{R}^{n \times n}$ to denote the signed adjacency matrix corresponding to the signed graph $\calG=(V,E,w)$ with $\AA_{ij} = w_{ij}$ for any edge $(i,j)\in E$, and $\AA_{ij} = 0 $ otherwise. Let $\AA^+\in \mathcal{R}^{n \times n}$ be the positive adjacency matrix defined as $\AA^+_{ij} = w_{ij}$ if $w_{ij}>0$, and $\AA^+_{ij} = 0 $ otherwise. The negative adjacency matrix $\AA^-$ is defined as $\AA^- = \AA-\AA^+$. Then we define the signed Laplacian matrix as $\LL=\DD-\AA$.

\section{Forest Matrix Theorem on Signed Graphs}

\subsection{Forest Matrix on Signed Graphs  }

The forest matrix $\QQ = (q_{ij})_{n\times n}$ is defined as $\QQ = (\II+\LL)^{-1}$. The properties of the forest matrix in unsigned graphs have been extensively studied in~\cite{ChSh97,ChSh98,ChSh06,SuZh23,SuZh24}. For example, in unsigned directed graphs, the forest matrix is row stochastic, with all its components in the interval $[0,1]$, and the diagonal elements in each column exceed the other elements.

In signed graphs, the forest matrix serves as the fundamental matrix in the signed opinion propagation Friedkin-Johnsen model~\cite{XuHuWu20,HaBhPa24}. However, its properties differ from those in the unsigned case. The forest matrix is no longer row stochastic, and the non-diagonal elements may be less than zero. As we will show later, for any $i, j \in V$ with $i \neq j$, we have $0 \leq |q_{ij}| \leq q_{jj} \leq 1$.

\subsection{Generalized  Spanning Converging Forests}
In this subsection, we introduce the concept of generalized spanning converging forests.   A spanning subgraph of $\calG$  is a subgraph of  $\calG$ with the node set being $V$ and the edge set being a subset of $E$.   A generalized spanning converging forest  is a spanning subgraph of  $\calG$, where the out-degree of each node  is no more than $1$, and all cycles are negative. Let $\calF $ be the set of all generalized  spanning converging forests of digraph $ \calG $. For any generalized spanning forest $\phi \in \calF$, the root nodes of $\phi$ are those with an out-degree of $0$. The root set  $\mathcal{R}(\phi )$  is defined as  $\mathcal{R}(\phi ) = \{i:(i,j) \notin \phi$, for any $j\in V  \}$.  We use $n^-(\phi)$  to denote the number of non-trivial negative cycles in $\phi$.

A generalized spanning converging forest $\phi$ may comprise several connected components. Let \( \kappa(\phi) \) denote the number of connected components in $\phi$.   By definition, each connected component   in $\phi$ is either a rooted converging tree or a structure containing a negative cycle. Consequently, the relationship between the number of root nodes and the number of components is given by $|\mathcal{R}(\phi)| \leq \kappa(\phi) \leq n$. The lower bound, $|\mathcal{R}(\phi)| \leq \kappa(\phi)$, is achieved when there are no cycles within $\phi$. The upper bound, $\kappa(\phi) \leq n$, is reached when each node in $\phi$ is isolated, resulting in the absence of any edges within $\phi$. To effectively distinguish between the two possible scenarios within each connected component and to simplify notation, we define the function \( r_{\phi} \) for each node \( i \) in \(\phi\) as follows: $r_{\phi}(i)=i$ if $i \in \mathcal{R}(\phi)$; $r_{\phi}(i)=0$ if $i$ belongs to a cycle; otherwise, if $(i,j)\in E_{\phi}$ and $i$ does not belong to a cycle, we recursively define $r_{\phi}(i)=r_{\phi}(j)$.


From this definition, we observe that for any node $i \in \phi$, if the connected component containing $i$ includes a negative cycle, then $r_{\phi}(i) = 0$. Conversely, if the connected component containing $i$ is a rooted converging tree, then the function $r_{\phi}$ maps the node $i$ to its root in its connected component.

For nodes $i,j\in V$, define $ \calF_{ij} $ to be the set of those generalized spanning converging forests, where node $j$ is the root, and there is a path from node $i$ to node $j$. Then the function $r_{\phi} $ maps node $i$ to $j$, that is, $\calF_{ij} = \{\phi: r_{\phi}(i) = j,  \phi \in \calF\}$. Then, for node $i\in V$,  we have $\calF_{ii} = \{\phi: i\in \calR(\phi),   \phi \in \calF\}$. For a generalized spanning converging forest  $\phi $, its weight $w(\phi)$ is defined as  $     w(\phi) = 2^{n^-(\phi)} \prod_{(i,j) \in E_\phi} \abs {w_{ij}} = 2^{n^-(\phi)}$.

 If there is no edge in $ \phi $, its weight is defined to be $ 1 $.
 Define the weight of set $\calF$ as $w(\calF) = \sum_{\phi\in \calF} w(\phi)$. Similarly, define $w(\calF_{ii}) = \sum_{\phi\in \calF_{ii}} w(\phi)$.  There is something different when we define 
$w(\calF_{ij}) = \sum_{\phi\in \calF_{ij}}{\rm sign}(P_{ij}) w(\phi)$, where ${\rm sign}(P_{ij})$ is the sign of the product of the weights of the arcs in the path from node $i$ to node $j$ in $\phi$. 

For example, we present a toy graph, $\calG_0$, comprising $3$ nodes and $4$ edges, including two positive and two negative edges. We list  all its $12$ generalized spanning converging forests $\phi_1,\phi_2,\cdots,\phi_{12}$. Notably, the last three forests, highlighted with a yellow background in Figure \ref{f0}, contain negative cycles. Following the definition provided, the weight \( w(\phi_i) \) is assigned as $1$ for \(i=1,\cdots,9\) and $2$ for \(i = 10,11,12\).

\begin{figure}[htbp!]
	\centering
	\includegraphics[width=1\columnwidth]{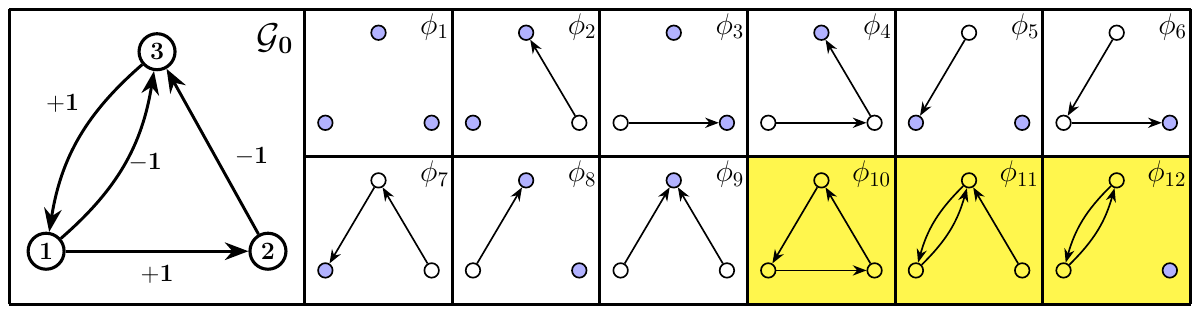}
	\caption{A toy signed graph $\calG_0$ with its $12$ generalized spanning converging forests. Blue nodes are roots.}\label{f0}	
\end{figure}

\subsection{ Signed Forest Matrix Theorem}
In this subsection, we introduce the forest matrix theorem in signed graphs. We extend the forest matrix theorem from the unsigned case~\cite{ChSh06, ChSh97, ChSh98} to accommodate signed graphs. First, we propose two lemmas that establish the relationship between the determinant of the matrix $\II+\LL$ and its submatrices, obtained by deleting one column and one row, with the weights of specific generalized spanning converging forests.

\begin{lemma}\label{th-wF}
    For a directed signed graph $\calG=(V, E,w)$, the determinant of matrix $\II+\LL$ is equal to the sum of the weights of all the generalized spanning converging forests: $ \det(\II+\LL) = w(\calF).$
\end{lemma}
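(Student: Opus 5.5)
We expand $\det(\II+\LL)$ as a sum over permutations and group the terms by functional digraphs. This is the standard route to the matrix-forest theorem, adapted to track signs. Write $M=\II+\LL$. Then $M_{ii}=1+d_i$ and, for $i\neq j$, $M_{ij}=-w_{ij}$ if $(i,j)\in E$ and $0$ otherwise (if self-loops are present they would contribute to both $\DD$ and $\AA$; we assume there are none, so diagonal entries are exactly $1+d_i$). Since $d_i=\sum_{j\in N(i)}|w_{ij}|$ counts each out-edge once, we split each diagonal factor as $M_{ii}=1+\sum_{j\in N(i)}1$. Expanding the product over rows, each term of $\det M$ then corresponds to a choice, for every node $i$, of either "nothing" (factor $1$) or one out-edge $(i,j)$. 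On the diagonal this edge carries the factor $+1=|w_{ij}|$; off the diagonal it carries $-w_{ij}$ and must be coordinated with the permutation.

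The cleaner formulation uses the Leibniz formula $\det M=\sum_\sigma \mathrm{sgn}(\sigma)\prod_i M_{i\sigma(i)}$. For each permutation $\sigma$, the non-fixed points form disjoint cycles $C$ that use the off-diagonal entries. A cycle $C$ of length $k\ge 2$ contributes $\mathrm{sgn}=(-1)^{k-1}$ times $\prod_{e\in C}(-w_e)=(-1)^k\prod_{e\in C}w_e$, for a net contribution of $-\mathrm{sign}(C)$. Each fixed point $i$ contributes $1+d_i$, which we expand into "root" (factor $1$) or "choose an out-edge of $i$" (factor $1$). The result is that $\det M=\sum_{H}\prod_{C}(-\mathrm{sign}(C))$. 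Here $H$ ranges over spanning subgraphs with out-degree at most $1$, together with a designation of some subset of their cycles as "permutation cycles". Cycles of $H$ that are not designated arise from the diagonal choices and contribute $+1$. Thus each cycle $C$ of a fixed $H$ contributes $1+(-\mathrm{sign}(C))$, namely $1$ if it is not designated and $-\mathrm{sign}(C)$ if it is. One must check that fixed points choosing edges can form cycles of length $\ge 2$, and that these are exactly the undesignated cycles. This is correct because every out-degree-$\le 1$ subgraph decomposes uniquely into its cycles plus trees, and every choice of designated cycles is consistent with a permutation.

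Summing over designations factorizes: $\det M=\sum_H\prod_{C\subseteq H}(1-\mathrm{sign}(C))$. A positive cycle gives factor $0$, which kills every $H$ containing a positive cycle. A negative cycle gives factor $2$. Therefore $\det M=\sum_{\phi\in\calF}2^{n^-(\phi)}=w(\calF)$, since the surviving $H$ are precisely the generalized spanning converging forests and the edge magnitudes are all $1$. The main obstacle is the bookkeeping in the second step: we must establish a clean bijection between (permutation, choice of out-edges at fixed points) and (out-degree-$\le 1$ subgraph, subset of its cycles), and verify the sign $(-1)^{k-1}\cdot(-1)^k=-1$ per cycle. As a check, the toy graph $\calG_0$ gives $9+3\cdot 2=15$, which can be compared against a direct computation of $\det(\II+\LL)$.
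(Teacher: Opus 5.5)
Your proposal is correct and is essentially the paper's proof. Both arguments use the same steps:
\begin{itemize}
\item expand $\det(\II+\LL)$ by the Leibniz formula, so that each non-trivial permutation cycle contributes $-\mathrm{sign}(C)$ (the paper's overall factor $(-1)^{n^+(\pi)}$);
\item expand each diagonal factor $1+d_i$ into ``root'' or ``choose an out-edge'';
\item swap the order of summation, so that for each out-degree-$\le 1$ subgraph the sum over which of its cycles come from the permutation factors as $(1-1)^{n^+}(1+1)^{n^-}$.
\end{itemize}
Your explicit bijection between (permutation, edge choices at fixed points) and (subgraph, designated subset of cycles) is a slightly more careful statement of the paper's regrouping over $\widehat{\pi}\in\mathcal{M}(\pi)$.
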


\begin{lemma}\label{th-wFij}
For a directed signed graph $\calG=(V, E,w)$, let $(\II+\LL)_{-j,-i}$ denote the matrix obtained by deleting the $j$-th row and $i$-th column. Then  the determinant of matrix $(\II+\LL)_{-j,-i}$  is related to the generalized spanning converging forests as  $\det(\II+\LL)_{-j,-i} =(-1)^{i+j}w(\calF_{ij})$.
\end{lemma}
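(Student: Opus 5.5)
Our plan is to reduce Lemma~\ref{th-wFij} to Lemma~\ref{th-wF} on a modified graph, which keeps the cycle-sign bookkeeping in one place. For $i=j$ this is immediate: deleting row and column $i$ of $\II+\LL$ gives $\II+\LL'$, where $\LL'$ is built from the graph with all out-edges of $i$ removed, except that the degrees of the other nodes still count their edges into $i$. Lemma~\ref{th-wF} is really a statement about any matrix $\II+\DD-\AA$ with $\DD_{kk}=\sum_l|\AA_{kl}|$. An edge $(k,i)$ that survives only in the degree can be modeled as an edge from $k$ to an absorbing root, i.e., the node $i$ itself with out-degree $0$. Expanding $\det$ of the principal minor then counts exactly the generalized forests in which $i$ has out-degree $0$, which is $\calF_{ii}$.

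For $i\neq j$ we use the standard trick from the matrix-tree theorem. We first write $(-1)^{i+j}\det(\II+\LL)_{-j,-i}$ as the $(j,i)$ cofactor. By multilinearity in row $j$, this equals $\det M$, where $M$ is $\II+\LL$ with row $j$ replaced by the unit vector $\ee_i^\top$. We then modify $M$ into a matrix of the form $\II+\DD'-\AA'$ on an auxiliary signed graph, so that $\det$ can be expanded through Lemma~\ref{th-wF}.

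The auxiliary graph is obtained by deleting all out-edges of $j$ and adding one edge $(j,i)$. Row $j$ of $\II+\LL'$ is then $\ee_j+|w|\ee_j - w\ee_i$, which is not $\ee_i^\top$. We therefore avoid fitting a graph exactly and instead do the combinatorial expansion directly. We expand $\det M=\sum_\sigma \mathrm{sgn}(\sigma)\prod_k M_{k\sigma(k)}$, writing each diagonal entry $1+d_k$ as the sum of the choice ``$1$'' and one choice per out-edge, exactly as in the proof of Lemma~\ref{th-wF}. Row $j$ is forced to pick column $i$ with coefficient $1$. Each term then corresponds to a functional subgraph: every node $k\neq j$ picks either nothing, an out-edge $(k,l)$ through its diagonal entry, or an off-diagonal entry $-w_{kl}$ that forms part of a permutation cycle. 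Node $j$ carries a virtual arc to $i$ of weight $+1$ that must lie on a permutation cycle.

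The cancellation is the same sign-reversing involution as in Lemma~\ref{th-wF}. A cycle can be realized either through diagonal entries, with weight $\prod|w|$, or as a permutation cycle, with weight $(-1)^{\text{len}-1}\prod(-w)=-\prod w$. Together these contribute $\prod|w|-\prod w$, which is $0$ for a positive cycle and $2$ for a negative cycle. This is exactly the factor $2^{n^-(\phi)}$ that appears in Lemma~\ref{th-wF}.

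The cycle through the virtual arc $(j,i)$ is forced to be a permutation cycle. It consists of a path $P$ from $i$ to $j$ in the graph together with the virtual arc, and a cycle of length $L$ contributes $(-1)^{L-1}\cdot\prod_{e\in P}(-w_e)\cdot 1$. Since $P$ has $L-1$ arcs, this equals $\prod_{e\in P} w_e=\mathrm{sign}(P_{ij})$. Removing the virtual arc leaves a generalized spanning converging forest in which $j$ is a root, $i$ reaches $j$, and all other cycles are negative and carry weight $2$. Summing gives $w(\calF_{ij})$, and therefore $\det(\II+\LL)_{-j,-i}=(-1)^{i+j}w(\calF_{ij})$.

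The main obstacle is verifying this bookkeeping. First, the path must avoid revisiting nodes that have already been chosen: the permutation structure guarantees that its nodes are distinct. Second, nodes in the path must not also take diagonal choices, which is guaranteed because each row picks exactly one entry. Third, we must check that the $(-1)$ powers align with $\mathrm{sign}(P_{ij})$ as computed above. The $i=j$ case is the degenerate instance in which the forced cycle is trivial.
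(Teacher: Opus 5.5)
Your proposal is correct and is essentially the paper's argument: both extend the bijection $V\setminus\{j\}\to V\setminus\{i\}$ by the arc $j\mapsto i$ to a permutation of $V$, identify the forced cycle with the path $P_{ij}$ contributing ${\rm sign}(P_{ij})$, and reuse the per-cycle factor $\prod|w|-\prod w\in\{0,2\}$ from Lemma~\ref{th-wF}. The only real difference is how you absorb $(-1)^{i+j}$: you replace row $j$ by $\ee_i^\top$ and use Laplace expansion, which is cleaner than the paper's inversion-count parity step, and your abandoned auxiliary-graph reduction is not needed.
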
 

An illustrative example can be seen in Figure~\ref{f0}, where the determinant of the toy graph $\calG_0$'s matrix $\II+\LL$ is calculated to be 15, aligning with the combined weights of all generalized spanning converging forests: $\phi_1$ to $\phi_9$ each have a weight of 1, while $\phi_{10}$ to $\phi_{12}$ each have a weight of 2.

Building on these foundations, we can now state the Signed Forest Matrix Theorem:
\begin{theorem}[Signed Forest Matrix Theorem]\label{th-qij}
For a directed signed graph $\calG=(V, E,w)$, the entry of the forest matrix $\QQ = (\II+\LL)^{-1} = (q_{ij})_{n\times n} $  is related to the generalized spanning converging forests as  $q_{ij} = \frac{w(\calF_{ij})}{w(\calF)} $.
\end{theorem}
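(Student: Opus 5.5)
The plan is to derive the theorem from Lemma~\ref{th-wF} and Lemma~\ref{th-wFij} through the adjugate (cofactor) formula for a matrix inverse. Write $\mathbf{M}=\II+\LL$. For any invertible matrix, $(\mathbf{M}^{-1})_{ij} = \frac{C_{ji}}{\det \mathbf{M}}$. Here $C_{ji} = (-1)^{i+j}\det \mathbf{M}_{-j,-i}$ is the $(j,i)$ cofactor, and $\mathbf{M}_{-j,-i}$ is obtained by deleting row $j$ and column $i$. This is exactly the minor that appears in Lemma~\ref{th-wFij}.

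The first step is to verify that $\mathbf{M}$ is invertible, so that $\QQ$ is well defined. By Lemma~\ref{th-wF}, $\det \mathbf{M} = w(\calF)$. Every $\phi\in\calF$ has weight $w(\phi)=2^{n^-(\phi)}\ge 1>0$. The edgeless spanning subgraph belongs to $\calF$, since every out-degree is $0$ and it has no cycles. Hence $w(\calF)\ge 1>0$ and $\mathbf{M}$ is nonsingular. Alternatively, one can note that $\mathbf{M}$ is strictly row diagonally dominant, because $1+d_i>\sum_{j}|A_{ij}|$; the forest count already gives positivity, though.

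The second step is to substitute Lemma~\ref{th-wFij}: $C_{ji} = (-1)^{i+j}\cdot(-1)^{i+j} w(\calF_{ij}) = w(\calF_{ij})$, so the sign factors cancel. Combining this with Lemma~\ref{th-wF} gives $q_{ij} = w(\calF_{ij})/w(\calF)$. For $i=j$ the same computation applies: the principal minor $\det\mathbf{M}_{-i,-i}$ equals $w(\calF_{ii})$, with sign $+1$. This is consistent with the convention that the weight of $\calF_{ii}$ carries no path sign, since the trivial path has an empty product.

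The main obstacle is bookkeeping rather than substance: the indices have to be transposed correctly in the adjugate formula. The $(i,j)$ entry of the inverse uses the minor with row $j$ and column $i$ deleted, and the conventions of Lemma~\ref{th-wFij} were stated in precisely that form, so I expect no further difficulty. All of the combinatorial content is contained in the two lemmas assumed from earlier in the paper.
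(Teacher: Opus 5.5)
Your proposal is correct and takes the same route as the paper, which likewise writes $q_{ij} = (-1)^{i+j}\det(\II+\LL)_{-j,-i}/\det(\II+\LL)$ via the adjugate formula and then substitutes Lemma~\ref{th-wF} and Lemma~\ref{th-wFij}. Your explicit check that $w(\calF)\ge 1$, using the edgeless forest so that $\II+\LL$ is invertible, is a small step the paper leaves implicit.
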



Theorem~\ref{th-qij} shows that in a signed graph, the entry \(q_{ij}\) of the forest matrix \(\QQ\) represents the ratio of the sum of weights of the generalized spanning converging forests— where the root of node \(i\) is   node \(j\)  —relative to the sum of weights of all generalized spanning converging forests. Notably, when all edges are positive, this finding is consistent with the forest matrix theorem for unsigned graphs  in prior studies~\cite{ChSh06,ChSh98}. Building on the insights provided by Theorem~\ref{th-qij}, we introduce the following lemma, which details specific properties of the entries of the forest matrix in signed graphs:

\begin{lemma}\label{le-pro}
    For a signed graph $\calG = (V,E,w)$, and any distinct nodes $i, j \in V$, the inequality $0 \leq |q_{ij}| \leq q_{jj} \leq 1$ holds. When $\calG$ is a balanced signed graph,  the sum of the absolute values of the entries in any row $i$ equals 1, that is $\sum_{j=1}^n |q_{ij}| = 1$. Moreover, in this scenario, the $i$-th diagonal element $q_{ii}$ satisfies $\frac{1}{1+d_i} \leq q_{ii}\leq \frac{2}{2+d_i}.$ 
\end{lemma}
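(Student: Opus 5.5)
Everything will be read off Theorem~\ref{th-qij}, $q_{ij} = w(\calF_{ij})/w(\calF)$, using only that all forest weights $w(\phi)=2^{n^-(\phi)}$ are positive and that forest classes can be compared by inclusion or by an injection.

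\textbf{Step 1: $|q_{ij}|\le q_{jj}\le 1$.} Since $w(\calF_{ii})\le w(\calF)$ and both are positive sums, $0\le q_{jj}\le 1$. For $i\neq j$, the triangle inequality gives
\[
|w(\calF_{ij})| \le \sum_{\phi\in\calF_{ij}} w(\phi).
\]
Every $\phi\in\calF_{ij}$ has $j$ as a root, so $\calF_{ij}\subseteq\calF_{jj}$. Hence $|w(\calF_{ij})|\le w(\calF_{jj})$, and dividing by $w(\calF)$ gives $|q_{ij}|\le q_{jj}$.

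\textbf{Step 2: balanced case, $\sum_j |q_{ij}|=1$.} In a balanced graph there are no negative cycles, so every $\phi\in\calF$ is an ordinary spanning converging forest with weight $1$. Each such $\phi$ sends $i$ to exactly one root, so $\calF$ is the disjoint union of the sets $\calF_{ij}$ over $j$. Take the bipartition $V_1\cup V_2$ and set $\sigma_v=\pm1$ according to the part containing $v$. Along any path the sign equals $\sigma_i\sigma_j$, so ${\rm sign}(P_{ij})=\sigma_i\sigma_j$ is the same for every $\phi\in\calF_{ij}$. Therefore $|w(\calF_{ij})|=|\calF_{ij}|$, and summing over $j$ gives $|\calF|=w(\calF)$, so $\sum_j|q_{ij}|=1$. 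In matrix terms this is the gauge transformation $\mathrm{diag}(\sigma)\QQ\,\mathrm{diag}(\sigma)$, which is the forest matrix of the unsigned graph, so the unsigned bounds transfer directly.

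\textbf{Step 3: bounds on $q_{ii}$ in the balanced case.} The same switching argument applies, so it suffices to prove $\frac{1}{1+d_i}\le q_{ii}\le\frac{2}{2+d_i}$ for an unsigned digraph.

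For the lower bound, define a map from forests in which $i$ is not a root to forests in which $i$ is a root: delete $i$'s out-edge. This produces a forest with $i$ as a root, but it is not injective. Instead, map each pair $(\psi,k)$, with $\psi\in\calF_{ii}$ and $k\in N(i)$, to $\psi+(i,k)$ whenever this creates no cycle. This map is injective, because the edge $(i,k)$ can be recovered as $i$'s out-edge. It is also onto the forests in which $i$ is not a root, because removing $i$'s out-edge gives a preimage. Hence $|\calF\setminus\calF_{ii}|\le d_i|\calF_{ii}|$, which gives $q_{ii}\ge 1/(1+d_i)$. Equivalently, from $(\II+\LL)\QQ=\II$ we get $q_{ii}(1+d_i)=1+\sum_k A_{ik}q_{ki}$, and Step 1 gives $|q_{ki}|\le q_{ii}$.

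The upper bound $q_{ii}\le 2/(2+d_i)$ is equivalent to $|\calF\setminus\calF_{ii}|\ge \frac{d_i}{2}|\calF_{ii}|$, and this is the main obstacle. The plan is to use the identity $q_{ii}(1+d_i)=1+\sum_{k\in N(i)}q_{ki}$ and bound $q_{ki}\le q_{ii}/2$ for each out-neighbour $k$. The combinatorial form of $q_{ki}\le q_{ii}/2$ is $|\calF_{ki}|\le\frac12|\calF_{ii}|$, which I would try to prove with an involution-type injection. To each $\phi\in\calF_{ki}$ associate two distinct forests in $\calF_{ii}$: $\phi$ itself, and the forest obtained by cutting $k$'s out-edge, which makes $k$ a root while $i$ remains a root. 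This second forest lies in $\calF_{ii}\setminus\calF_{ki}$. The cut map is injective given $k$ (re-add the path edge $k\to$ its recorded successor). I need to check this step carefully; if the injectivity fails, I would fall back on the matrix identity together with the symmetric-degree argument used in the unsigned literature.

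Given the bound, $q_{ii}(1+d_i)\le 1+\frac{d_i}{2}q_{ii}$, which rearranges to $q_{ii}\le 2/(2+d_i)$.
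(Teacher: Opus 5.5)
Your Steps 1 and 2 are correct and match the paper's argument. Your lower bound in Step 3 is also correct: the injection $(\psi,k)\mapsto\psi+(i,k)$ from a subset of $\calF_{ii}\times N(i)$ onto $\calF\setminus\calF_{ii}$ is a valid combinatorial substitute for the paper's observation that $q_{ik}w_{ki}\ge 0$ in the balanced case. (Your ``equivalently'' remark cites the wrong fact: the matrix form of the lower bound needs $w_{ik}q_{ki}\ge 0$, not $|q_{ki}|\le q_{ii}$.)

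The gap is the upper bound. Your cut map $\phi\mapsto\phi-(k,s)$, with $s$ the successor of $k$ in $\phi$, is not injective, because its image makes $k$ a root and keeps no record of $s$. For example, with edges $(k,i),(k,m),(m,i)$, the forests $\{(k,i),(m,i)\}$ and $\{(k,m),(m,i)\}$ both lie in $\calF_{ki}$, and both map to $\{(m,i)\}$. So $|\calF_{ki}|\le\frac12|\calF_{ii}|$ is not established, and the ``symmetric-degree'' fallback is not an argument as stated.

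The missing idea is to use the other inverse identity, so that your Step 2 does the work. Reading $\QQ(\II+\LL)=\II$ at entry $(i,i)$, i.e.\ pairing row $i$ of $\QQ$ with column $i$ of $\II+\LL$, gives
\[
(1+d_i)\,q_{ii} = 1+\sum_{k\neq i} q_{ik}w_{ki} \le 1+\sum_{k\neq i}|q_{ik}| = 2-q_{ii},
\]
hence $q_{ii}\le 2/(2+d_i)$. This is the paper's proof.

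Alternatively, your column route can be rescued, because $|q_{ki}|\le q_{ii}/2$ is true. It needs a maximum principle rather than an injection. For $k\neq i$, row $k$ of $(\II+\LL)\QQ=\II$ gives $(1+d_k)q_{ki}=\sum_{v\in N(k)}w_{kv}q_{vi}$. Let $M=\max_{k\neq i}|q_{ki}|$ be attained at $k^*$. If $i\in N(k^*)$, then $(1+d_{k^*})M\le q_{ii}+(d_{k^*}-1)M$; otherwise $(1+d_{k^*})M\le d_{k^*}M$. Either way $M\le q_{ii}/2$.

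This version uses neither balance nor switching, so it shows that $q_{ii}\le 2/(2+d_i)$ holds in every signed digraph. Only the lower bound genuinely needs balance: on a negative $2$-cycle, $q_{ii}=2/5<1/(1+d_i)=1/2$.
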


\section{Positive Loop-Erased Random Walks }

\subsection{ Generating a Generalized Spanning Converging Forest Based on Random Walk}
In this subsection, we introduce a random walk approach to  generate a generalized spanning converging forest on signed graphs. Before that, we briefly review the  loop-erasure operation on a random walk~\cite{La80}, since it plays an important role in our algorithm. Concretely, for a random walk $P=v_1,(v_1,v_2),v_2,\ldots,v_{j-1},(v_{j-1},v_j),v_j$, the loop-erasure operation $P_{\rm LE}$ on $P$ is an alternating sequence  $\widetilde{v}_1,(\widetilde{v}_1, \widetilde{v}_2), \widetilde{v}_2\ldots, \widetilde{v}_{q-1}$, $(\widetilde{v}_{q-1}, \widetilde{v}_q),\widetilde{v}_q$ of nodes and edges, which is  obtained inductively  as follows. First, set $\widetilde{v}_1= v_1$ and append $ \widetilde{v}_1$ to $P_{\rm LE}$. Suppose that sequence $\widetilde{v}_1$, $(\widetilde{v}_1, \widetilde{v}_2)$, $\widetilde{v}_2$, $\ldots$, $\widetilde{v}_{h-1}$, $(\widetilde{v}_{h-1},\widetilde{v}_h)$, $\widetilde{v}_h$ has been added to $P_{\rm LE}$ for some $h\geq 1$. If $\widetilde{v}_h=v_j$, then $q= h$ and $\widetilde{v}_h$ is the last node in $P_{\rm LE}$. Otherwise, define $ \widetilde{v}_{h+1}= v_{r+1}$, where $ r = \max\{i:v_i = \widetilde{v }_h \}$. 


Wilson proposed an algorithm for generating a spanning tree rooted at a given node based on the loop-erasure operation on a random walk \cite{Wi96}.  However, adapting the conventional loop-erased random walk method to generate a generalized spanning converging forest for signed graphs is challenging, owing to the differences between the signed forest matrix Theorem~\ref{th-qij} and the unsigned case. Negative cycles are allowed in generalized spanning forests for signed graphs, and thus, the traditional loop-erased random walk approach requires modification. To address these challenges, we propose an extension of the traditional loop-erased random walk algorithm to generate a generalized spanning converging forest for signed graphs. Specifically, we describe the steps for generating a generalized spanning converging forest $\phi=(V_\phi,E_\phi)$ in a signed digraph $\mathcal{G}=(V,E,w)$ as follows:


(i) Set $\phi=(V_ {\phi}, E_ {\phi}) = (\emptyset,\emptyset)$. 

(ii) Choose a node $i$ from $ V \setminus V_{\phi } $ and create a random walk $P = v_i$ starting at node $i$ in $ \calG$.

(iii) At each time step, let $u$ denote the current node of the random walk $P$.  The walk either terminates with probability $\frac{1}{1+d_u}$, in which case node $u$ is added to the set of root nodes of $\phi$, or jumps to a random neighbor $j$ of the current position $u$.  If the former case occurs, proceed to step (v). Otherwise, if the walk jumps from $u$ to $j$, add edge $(u,j)$ and node $j$ to $P$ and proceed to step (iv).

(iv) Suppose that now the random walk  $P$  starts at node $i$ and ends at node $j$. If $j$ is already in the set  $V_{\phi}$, proceed to step (v). Otherwise, check if there exists a negative cycle $C$ in $P$ that includes node $j$. If such a cycle is found, proceed to step (vi). Otherwise, continue the random walk according to step (iii).

(v) Perform loop-erasure operation on the random walk $P$ to get $P_{\rm LE}$, and add the nodes and edges in $P_{\rm LE}$ to $\phi$. Then update $V_{\phi} $ and $E_{\phi}$. If $V_{\phi }  \neq V $, repeat step (ii); otherwise terminate the loop.

(vi) Assume that the current random walk $P$ goes from node $i$ to node $j$, and $j$ belongs to a negative cycle $C$. We can partition $P$ into two parts, namely $P'$ and $C$, where $P'$ is the portion of the walk preceding the negative cycle. The loop-erasure operation is then performed on the path $P'$ to obtain $P'_{\rm LE}$, and the resulting path $(P'_{\rm LE}, C)$, which connects the end of $P'_{\rm LE}$ to the cycle $C$, is added to the graph $\phi$.  The sets of vertices and edges in $\phi$, $V_{\phi}$ and $E_{\phi}$, are then updated accordingly. If $V_{\phi} \neq V$, the circulation starts again from step (ii); otherwise the algorithm terminates.

In Algorithm~\ref{alg-grf}, we provide a detailed description of the pseudocode for algorithm \textsc{GSCF}. It is evident that this algorithm produces a generalized spanning converging forest. In the following subsection, we will delve into the algorithm's workings and prove that its expected running time is independent of the order in which nodes are selected.

\subsection{ Running Time Analysis   }
In this subsection, we present an analysis of the expected time complexity of Algorithm~\ref{alg-grf}.

In Algorithm~\ref{alg-grf}, each time a branch is added to $\phi$ in line 25 or 28, the random walk  restarts from a new node by going back to line 4. Therefore, it is necessary to specify a predetermined order in which the nodes are selected in line 4 of the algorithm. In the following, we present a lemma, demonstrating  that the expected time complexity of Algorithm~\ref{alg-grf} is independent of the order in which the nodes are selected in line 4 of the algorithm.

\begin{lemma}\label{le-indpdt}
For a given graph $\calG=(V,E,w)$, the expected time complexity of Algorithm~\ref{alg-grf} is independent of the order in which the random walk starts at each node.
\end{lemma}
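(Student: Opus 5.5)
We plan to use Wilson's ``stacks of cards'' (cycle-popping) construction, adapted to the signed setting. Before running the algorithm, we attach to every node $u\in V$ an infinite stack of independent random instructions. Each instruction is either ``stop'' (drawn with probability $\frac{1}{1+d_u}$) or ``move to neighbour $j$'' (drawn with probability $\frac{1}{1+d_u}$ for each $j\in N(u)$, weighted by $|w_{uj}|$). Whenever the walk in Algorithm~\ref{alg-grf} is at $u$, it reads and consumes the top instruction of the stack at $u$. This coupling reproduces exactly the law of the algorithm for every choice of starting order. Under it, the running time equals, up to a constant factor, the total number of instructions consumed. The goal is therefore to show that this number is a deterministic function of the stacks alone, not of the order. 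Taking expectations then finishes the proof.

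The key steps, in order, are as follows.

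\emph{Step 1: the current state as a visible graph.} At any moment the top instructions of all stacks define a ``visible'' graph. In it, each node has out-degree at most one: an edge to its pointed neighbour, or none if the top is ``stop''. Its cycles are of two kinds. \emph{Positive} cycles are erased: the loop-erasure step of the algorithm pops their instructions. \emph{Negative} cycles are kept, since step~(vi) adds such a cycle to $\phi$ as it stands.

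\emph{Step 2: the algorithm as cycle popping.} We show that, whatever the order, the algorithm's operation amounts to repeatedly popping positive cycles of the visible graph. Popping a cycle means discarding the top card of each of its nodes. Nodes that reach a root or a negative cycle are frozen into $\phi$, and their top cards are never popped. The process ends once the visible graph has no positive cycle, at which point it is exactly the output forest $\phi$.

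\emph{Step 3: order-independence of popping.} Following Wilson's proof, we label each card by its depth in its stack. A cycle is then identified by its nodes together with the depths of their cards. We show the diamond property: if positive cycle $C_1$ is poppable now and a different positive cycle $C_2$ is popped first, then $C_1$, with the same labels, remains poppable afterwards. The reason is that $C_1$ and $C_2$ are disjoint as labelled cycles, because each node has a single top card. Consequently, either the popping process never terminates under any order, or it pops exactly the same multiset of labelled cycles under every order and ends in the same final forest. The number of consumed instructions is the sum of the cycle lengths plus one final card per node, so it is order-invariant.

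The main obstacle is making Steps 2 and 3 work with negative cycles present. We must ensure that a negative cycle, once visible, is never popped as a side effect of a positive-cycle erasure. We must also ensure that the test ``does $P$ close a negative cycle at $j$'' in step~(iv) corresponds exactly to a negative cycle appearing among the top cards. The check is that, within $P$, the segment between the last two visits to $j$ is exactly the cycle of current top cards. Under the loop-erasure convention, its sign is therefore a property of the visible graph and not of the traversal order. Once this identification holds, negative cycles behave like roots, i.e.\ as absorbing ``sinks''. Wilson's diamond argument then goes through unchanged, which proves Lemma~\ref{le-indpdt}.
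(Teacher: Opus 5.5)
Your proposal is correct and follows essentially the same route as the paper. The paper's matrix $\TT^L$ with pointers $\hh$ is your family of instruction stacks, and its position vectors $\cc$ are your depth labels. It also shows, Wilson-style (by taking the first cycle in one order that shares a node with the first cycle popped in the other), that every starting order erases the same labelled positive cycles, so $\norm{\hh}_1$ is order-invariant.

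Your Step 2 in fact spells out what the paper leaves implicit: nodes already in $\phi$, including negative cycles and the paths into them, can never lie on a positive cycle of the visible graph. One precision point: the cycle closed at $j$ should be read off the chronologically loop-erased path, not the raw segment between the last two visits to $j$. That raw segment can contain erased positive sub-loops, or even cards that were already popped.
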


Next, we present Theorem~\ref{th-Own}, which provides insight into the expected time complexity of Algorithm~\ref{alg-grf}.

\begin{theorem}\label{th-Own}
    The expected time complexity of Algorithm~\ref{alg-grf} is  $O( n)$.
\end{theorem}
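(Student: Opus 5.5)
The plan is to bound the expected total number of random-walk steps, since each step (a termination coin flip, a neighbor choice, a membership test against $V_\phi$, and the negative-cycle check along $P$) can be done in amortized $O(1)$ time with suitable bookkeeping.

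The key observation is that at every node $u$ the walk stops with probability $\frac{1}{1+d_u}$. If the walk moves, it crosses one of $d_u$ edges, each chosen with conditional probability $1/d_u$. We compare the process with the unsigned Wilson-type forest sampler on the graph where all weights are replaced by $|w_{ij}|$. That sampler generates a rooted spanning forest with expected running time equal to the trace of $(\II+\LL^{u})^{-1}(\II+\DD)$, where $\LL^u=\DD-|\AA|$ is the unsigned Laplacian. This is the standard fact used for forest sampling in \cite{SuZh24}: the expected number of visits to node $u$ by the Wilson-type sampler equals the expected visits of a killed walk started at $u$ before absorption, which equals $(1+d_u)\,[(\II+\LL^u)^{-1}]_{uu}$.

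Our algorithm can only stop earlier than its unsigned counterpart. Indeed, step (iv) additionally halts the current walk when a negative cycle closes, and that walk's nodes are then absorbed into $V_\phi$. Hence the set of visits to each node is stochastically dominated, and the expected total running time is at most $\sum_u (1+d_u)[(\II+\LL^u)^{-1}]_{uu}$. By \cref{le-indpdt} the order of starting nodes is irrelevant, so we may fix the order used in the coupling.

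To finish, we bound each term. By the unsigned forest matrix theorem (or \cref{le-pro} applied to an all-positive graph), $[(\II+\LL^u)^{-1}]_{uu}\le \frac{2}{2+d_u}\le\frac{2}{1+d_u}$, so each summand is at most $2$ and the total is $O(n)$. Alternatively, we can argue directly: a walk started at $u$ stays at $u$ on average at most $1+d_u$ times, since each visit terminates with probability $\frac{1}{1+d_u}$, and returning to $u$ requires leaving and coming back. This gives the visit count as the diagonal entry above.

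The main obstacle is making the domination and coupling rigorous. The signed process erases and retains cycles differently from Wilson's algorithm, so the natural route is the "stack of cards" construction. We attach to each node an i.i.d. sequence of instructions (stop, or move to a neighbor). Wilson's process pops cycles of this stack, while ours pops only positive cycles and freezes negative ones. Each instruction is consumed at most once in either process, and ours consumes a subset of what the unsigned process consumes, which gives the comparison. We also need to verify that the negative-cycle detection in step (iv) costs $O(1)$ amortized per step. For this we keep, for each node currently on the walk, its position and the cumulative sign product of the walk prefix. On revisiting $j$, the sign of the closed cycle is then the ratio of two stored prefix signs.
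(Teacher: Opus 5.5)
Your proof is correct, but the domination step is organized differently from the paper's. The paper never compares against the unsigned sampler as a whole. It writes the cost as $\sum_i \mathbb{E}(h_i)$ and uses Lemma~\ref{le-indpdt} (order independence of the \emph{signed} process) to make $i$ the first start node. All reads at $i$ then happen during that first walk, whose trajectory is a killed walk that can only stop earlier (at a negative cycle). This bounds $\mathbb{E}(h_i)$ directly by the Green's function entry $(1+d_i)[(\II+\LL^u)^{-1}]_{ii}$, and the bound $\le 2n$ follows exactly as in your last step.

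You instead couple the two entire processes on common instruction stacks and import the known running time of the unsigned sampler. To make that rigorous, say explicitly why the signed process consumes a subset of the unsigned one's cards. Each positive cycle it pops is visible when popped, so its popping sequence is a legal partial popping sequence for the unsigned game. By Wilson's theorem, any legal sequence extends to the unique (a.s.\ finite) set of cycles popped under every order. Since in both processes the number of reads at $u$ is one plus the number of popped cycles through $u$, per-node domination follows. The ``stops earlier'' intuition alone does not suffice, because after the first negative cycle the two processes have different $V_\phi$.

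What your route buys is that the domination holds for every start order. It therefore does not need Lemma~\ref{le-indpdt} at all and rests instead on the classical unsigned theorem. You also account for the per-step cost via prefix signs, which the paper sidesteps by identifying running time with $\|\hh\|_1$. The paper's route is more direct and needs no auxiliary algorithm.

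One caution: your ``direct'' alternative for bounding each term only gives at most $1+d_u$ visits per node, i.e.\ $O(n+m)$. It is the bound $[(\II+\LL^u)^{-1}]_{uu}\le 2/(2+d_u)$ that yields $O(n)$.
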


\section{Forest Sampling  Algorithm for  Estimating the Forest Matrix}

\subsection{Estimator for the Entry of Forest Matrix}
In this subsection, we propose estimators for the entries of the forest matrix, leveraging the Signed Forest Matrix Theorem~\ref{th-qij} and the positive loop-erased random walk introduced in Algorithm~\ref{alg-grf}.

Consider a directed signed graph $\calG = (V, E, w)$ with its corresponding forest matrix $\QQ$.  Our goal is to  give an estimation for $\widehat{q}_{ij}$ for the entry $q_{ij}$ for $i,j\in V$. Directly inverting the matrix $\II + \LL$ to obtain $\QQ$ incurs a time complexity of \(O(n^3)\), which is infeasible for   large-scale graphs. According to Theorem~\ref{th-qij}, for any pair of nodes $i, j \in V$, the $(i, j)$-th element $q_{ij}$ of the forest matrix $\QQ$ can be represented as: 
\begin{equation}\label{eq-qij}
    q_{ij} = \frac{w(\calF_{ij})}{w(\calF)} = \frac{\sum_{\phi\in \calF_{ij}}{\rm sign}(P_{ij}) 2^{n^-(\phi)}}{\sum_{\phi \in \calF} 2^{n^-(\phi)}}.
\end{equation}


From Equation~\eqref{eq-qij}, the entries of the forest matrix in signed graphs can be interpreted as the ratio of the total weight of forests in $\calF_{ij}$ to the total weight of all forests in $\calF$. Specifically, $\calF_{ij}$ consists of forests where nodes $i$ and $j$ belong to the same connected component, with $j$ serving as the root, that is $\calF_{ij} = \{\phi: r_{\phi}(i) = j,  \phi \in \calF\}$.  In Algorithm~\ref{alg-grf}, a generalized spanning converging forest is generated using a positive loop-erased random walk. To estimate $q_{ij}$, we propose sampling $l$ forests using Algorithm~\ref{alg-grf}. Before defining the estimator, we establish a lemma to demonstrate that the generalized spanning converging forests generated by Algorithm~\ref{alg-grf} are uniformly sampled from the set $\calF$.
\begin{lemma}\label{le-uniform}
    Suppose that $\phi_0\in \calF$ is a fixed generalized spanning converging forest, and Algorithm~\ref{alg-grf} returns a generalized spanning converging forest $\phi$. Then we have $\mathbb{P}(\phi = \phi_0) = \frac{1}{ | \calF |}.$
\end{lemma}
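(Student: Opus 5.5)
The plan is a Wilson-style ``stacks and cycle popping'' argument, adapted so that negative cycles are frozen rather than popped. The starting observation is a local one. In Algorithm~\ref{alg-grf}, every time the walk sits at a node $u$, one of $1+d_u$ outcomes is chosen, each with probability exactly $\frac{1}{1+d_u}$: it either stops (declaring $u$ a root) with probability $\frac{1}{1+d_u}$, or moves to a given neighbour $j$ with probability $\frac{d_u}{1+d_u}\cdot\frac{1}{d_u}=\frac{1}{1+d_u}$. So every possible ``local choice'' at $u$, whether root or a specific out-edge, has the same weight, and any $\phi\in\calF$ is described by one local choice per node. The claimed uniformity should therefore follow once I show that
\[
\mathbb{P}(\phi=\phi_0)=c\prod_{u\in V}\frac{1}{1+d_u}
\]
with $c$ independent of $\phi_0$.

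\textbf{Step 1: stack coupling.} Attach to each node $u$ an infinite i.i.d.\ stack of instructions, each being ``root'' or ``go to $j$'' for $j\in N(u)$, all with probability $\frac{1}{1+d_u}$. The walk at $u$ reads the top unused instruction of $u$'s stack. The tops of all stacks form a spanning subgraph $T$ with out-degree at most $1$.

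\textbf{Step 2: the algorithm as cycle popping.} I will show that running Algorithm~\ref{alg-grf} is equivalent to the following procedure. Repeatedly find a \emph{positive} cycle in $T$ and pop it: discard the top instruction of each node on the cycle. Stop when $T$ contains no positive cycle. The output is the final $T$, which is then by definition a generalized spanning converging forest.

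To check this, I will verify each phase of the algorithm against the popping picture.
\begin{itemize}
\item Loop erasure in steps (iii)--(v) erases exactly positive cycles. This holds because step (iv) intercepts a negative cycle the moment it closes, so any loop that gets erased must be positive. Erasing such a loop corresponds to popping those instructions.
\item Stopping on hitting $V_\phi$ leaves the tops of already-processed nodes frozen.
\item Step (vi) keeps a closed negative cycle as part of the final $T$, with those tops never popped again.
\end{itemize}
Here I must be careful to read the negative-cycle test in step (iv) as applying to the cycle just closed in the current loop-erased path, rather than to earlier erased excursions.

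\textbf{Step 3: order-independence of popping.} This is the main obstacle. I will adapt Wilson's lemma. Label each instruction by its node and its stack depth, so that a ``colored cycle'' is a cycle together with the depths of its instructions. The claim is that, for fixed stacks, either every popping order runs forever, or every order pops the same finite multiset of colored cycles and ends with the same $T$.

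The proof follows Wilson's exchange argument. Suppose colored cycle $C$ is poppable now and we instead pop a different cycle $C'$. Then $C'$ is disjoint from $C$ as a colored cycle, since two cycles sharing a node with the same top instruction must coincide. Hence $C$ stays poppable afterwards. Crucially, whether $C$ is \emph{positive} depends only on the signs of its edges, which are fixed by its colored instructions, so popping $C'$ cannot change it. Negative cycles are never poppable, and this too is a fixed property of their instructions. These facts are exactly what Wilson's induction needs.

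\textbf{Step 4: conclusion.} By Step 3, the event $\{\phi=\phi_0\}$ is the disjoint union, over finite collections $\mathcal{C}$ of colored positive cycles that can be popped in sequence, of the events ``the instructions of $\mathcal{C}$ appear at the prescribed depths, and the instructions directly beneath them form $\phi_0$.'' By independence of the stack entries, each such event has probability $\mathbb{P}(\mathcal{C})\prod_u\frac{1}{1+d_u}$. Whether $\mathcal{C}$ is admissible does not depend on what lies beneath it, so summing over $\mathcal{C}$ gives $c\prod_u\frac{1}{1+d_u}$ with $c$ independent of $\phi_0$.

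The algorithm terminates almost surely, since every step ends the walk with positive probability and Theorem~\ref{th-Own} even bounds the expected time. Hence these probabilities sum to $1$ over $\calF$, which forces each to equal $\frac{1}{|\calF|}$.
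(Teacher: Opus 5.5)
Your proof is correct, but it takes a different and more rigorous route than the paper's. The paper's proof consists only of your opening observation. Since becoming a root at $i$ and moving to any given neighbour each have probability $\frac{1}{1+d_i}$, it concludes directly that $\mathbb{P}(\phi=\phi_0)$ is proportional to $\prod_{i}\frac{1}{1+d_i}$. It never explains why the probability spent on erased positive loops contributes a factor that is the same for every $\phi_0$. That factor is exactly your constant $c$, and your Steps 1--4 are what actually justify it:
\begin{enumerate}
\item the stack coupling;
\item popping only positive cycles while negative ones stay frozen;
\item Wilson's order-independence, using that positivity is intrinsic to a colored cycle;
\item the decomposition of $\{\phi=\phi_0\}$ over admissible popped collections, whose admissibility does not depend on what lies beneath them.
\end{enumerate}

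The paper does build your stacks, as the matrix $\TT^L$, and it proves that the erased positive cycles do not depend on the order of starting nodes. But it does this only in the proof of Lemma~\ref{le-indpdt}, for the running time. Your argument reuses that machinery to determine the output distribution.

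Your explicit reading of the negative-cycle test in step (iv) also does real work. You apply it to the simple cycle closed in the current loop-erased path. The pseudocode's ``$P$ has a negative cycle'' is ambiguous for the raw walk. Under your reading, every loop that gets erased and every negative cycle that gets frozen is a cycle of the current top-graph. That is what makes a run of \textsc{GSCF} a legal popping sequence.

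The paper's version buys brevity. Yours buys an actual proof of uniformity, and as a by-product shows that the output law does not depend on the order of starting nodes.
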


With Lemma~\ref{le-uniform}, now we suppose that we execute Algorithm~\ref{alg-grf}  $l$ times to generate $l$ generalized spanning converging forests $\phi_1, \cdots, \phi_l$. Define the estimator    $\widehat{w}_l({\calF} )$  as $ \widehat{w}_l({\calF} ) = \frac{|\calF|}{l} \sum_{k=1}^l 2^{n^-(\phi_k)}$. And define the estimator $ \widehat{w}_l({\calF_{ij}} )$ as $    \widehat{w}_l({\calF_{ij}} ) = \frac{|\calF|}{l} \sum_{k=1}^l 2^{n^-(\phi_k)} {\rm sign}(P_{ij}) \mathbb{I}_{\{ r_{\phi_{k}}(i) = j \}}$.
Here, $\mathbb{I}$ denotes the indicator function, and $\mathbb{I}_{\{ r_{\phi_{k}}(i) = j \}}$ takes the value $1$ if node $j$  is the root in forest $\phi_k$, and $0$ otherwise. Then we have the following lemma:
  
\begin{lemma}\label{le-omegal}
       For nodes $i,j\in V$ and $l$ generalized  spanning converging forests   generated from Algorithm~\ref{alg-grf}, the variables $\widehat{w}_l({\calF} )$  and $ \widehat{w}_l({\calF_{ij}} )$  are unbiased estimators of $w(\calF)$ and $w(\calF_{ij})$, respectively. 
\end{lemma}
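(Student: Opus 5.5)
Since the estimators $\widehat{w}_l(\calF)$ and $\widehat{w}_l(\calF_{ij})$ are empirical averages, linearity of expectation reduces the claim to a single sample. The $l$ forests $\phi_1,\dots,\phi_l$ come from $l$ runs of Algorithm~\ref{alg-grf}, and each $\phi_k$ has the same law as the output $\phi$ of one run. So it suffices to show that
\[
\mathbb{E}\bigl[|\calF|\,2^{n^-(\phi)}\bigr]=w(\calF),
\qquad
\mathbb{E}\bigl[|\calF|\,2^{n^-(\phi)}\,{\rm sign}(P_{ij})\,\mathbb{I}_{\{r_\phi(i)=j\}}\bigr]=w(\calF_{ij}).
\]
Independence of the $l$ runs is not needed here; only the identical marginal distribution matters.

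For the first identity, we invoke Lemma~\ref{le-uniform}, which states that $\mathbb{P}(\phi=\phi_0)=1/|\calF|$ for every $\phi_0\in\calF$. Because the algorithm always returns an element of $\calF$, we can expand the expectation as a finite sum over $\calF$. This gives $\mathbb{E}[|\calF|2^{n^-(\phi)}]=\sum_{\phi_0\in\calF}|\calF|\cdot\frac{1}{|\calF|}2^{n^-(\phi_0)}=\sum_{\phi_0\in\calF}w(\phi_0)=w(\calF)$. Here we use that $w(\phi_0)=2^{n^-(\phi_0)}$, since every edge weight has absolute value $1$.

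For the second identity, we expand the expectation in the same way. The indicator $\mathbb{I}_{\{r_{\phi_0}(i)=j\}}$ restricts the sum to $\calF_{ij}=\{\phi_0: r_{\phi_0}(i)=j\}$, and we obtain $\sum_{\phi_0\in\calF_{ij}}{\rm sign}(P_{ij})\,2^{n^-(\phi_0)}=w(\calF_{ij})$ by the signed definition of $w(\calF_{ij})$.

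The only point needing care is that ${\rm sign}(P_{ij})$ must be a well-defined function of the forest. When $r_{\phi_0}(i)=j$, node $i$ lies in a rooted converging tree, so the path from $i$ to the root $j$ is unique, because each node has out-degree at most $1$ in $\phi_0$. When $r_{\phi_0}(i)\neq j$, the term is zero because of the indicator, so the sign never needs to be defined there. The case $i=j$ fits the same computation, with the empty path taken to have sign $+1$. Beyond this check, the argument is a direct consequence of Lemma~\ref{le-uniform}, and I expect no real obstacle.
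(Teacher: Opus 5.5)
Your proposal is correct and matches the paper's proof: both use linearity of expectation, invoke Lemma~\ref{le-uniform} to write each expectation as a uniform sum over $\calF$, and let the indicator restrict the second sum to $\calF_{ij}$. Your additional remarks are correct details the paper leaves implicit: independence of the $l$ runs is not needed, and ${\rm sign}(P_{ij})$ is well defined on $\calF_{ij}$, with sign $+1$ when $i=j$.
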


According to Lemma~\ref{le-omegal} and equation~\eqref{eq-qij}, we can rewrite the entry $q_{ij}$ as $q_{ij} = {\mathbb{E}(\widehat{w}_l({\calF_{ij}} ))}/{\mathbb{E}(\widehat{w}_l({\calF} )) }$. For each pair of nodes $i,j\in V$, we define the variable $\widehat{q}_{ij} = {\widehat{w}_l({\calF_{ij}} )}/{\widehat{w}_l({\calF} )}$.  Using this sampling-based estimator, we can efficiently approximate the entries of the forest matrix $\QQ$, as well as compute linear combinations of its entries.

\subsection{Estimation for the Diagonal of Forest Matrix}
In this subsection, we propose an efficient algorithm for estimating the diagonal vector of the forest matrix, which is denoted as $\qq = (q_{11}, \cdots, q_{nn})^\top$.

The diagonal elements of the forest matrix are significant as node centrality measures in unsigned graphs~\cite{JiBaZh19, SuZh23}. In the context of signed graphs, the diagonal entry $q_{ii}$ remains important. Specifically, $q_{ii}$ represents the ratio of the total weight of generalized spanning converging forests rooted at node $i$ to the total weight of all forests in the graph, as shown in Theorem~\ref{th-qij}. Unlike in unsigned graphs, where the lower bound of $q_{ii}$ is $\frac{1}{1 + d_i}$, in signed graphs, $q_{ii}$ can be smaller but remains strictly positive. Furthermore, while off-diagonal entries $q_{ij}$ of the forest matrix can take both positive and negative values, the diagonal entries $q_{ii}$ are always positive, as established in Lemma~\ref{le-pro}.

Signed graphs exhibit a more complex structure due to the presence of negative edges. As a result, existing methods for unsigned graphs~\cite{JiBaZh19, GrAnPrMe21, SuZh24} fail to extend effectively to signed graphs. To address this limitation, we use the vector  $\widehat{\qq} = (\widehat{q}_{11}, \cdots, \widehat{q}_{nn})^\top$, to estimate the diagonal vector $\qq$ of the forest matrix, where each diagonal entry is estimated as $\widehat{q}_{ii} = {\widehat{w}_l({\calF_{ii}} )}/{\widehat{w}_l({\calF} )}$. To efficiently estimate the diagonal vector $\qq$ of the forest matrix $\QQ$, we  sample $l$ generalized spanning converging forests to compute the estimator $\widehat{\qq}$.



However, because the estimator $\widehat{\qq}$ primarily focuses on identifying root nodes, it may overlook additional informative aspects of the network structure. We now propose an alternative expression for the diagonal element $q_{ii}$. Leveraging the equation $\QQ(\II+\LL) = \II$, we  obtain that for any node $i\in V$, $1 = (1+d_i)q_{ii} - \sum_{k\neq i} q_{ik}w_{ki}$. That is, $q_{ii } = \frac{1}{1+d_i}(1+\sum_{k\neq i}q_{ik}w_{ki})$. Accordingly, we define $\widetilde{q}_{ ii}$ as $\widetilde{q}_{ ii} = \frac{1}{1+d_i}(1+\sum_{k\neq i}\widehat{q}_{ik}w_{ki})$, and we use $\widetilde{q}_{ii}$ to estimate $q_{ii}$. Below, we outline the pseudocode for our two algorithms, named \textsc{Forest Matrix Diagonal Estimator  (FMDE)} and \textsc{Forest Matrix Diagonal Estimator Plus (FMDE+)}. The pseudocodes are provided in  Appendix.

We now introduce Theorem~\ref{th-var} that highlights the efficiency of the estimator \( \widetilde{q}_{ii}\) compared to \( \widehat{q}_{ii}\) within the context of balanced signed graphs.

\begin{theorem}\label{th-var}
In a balanced signed graph $\calG$, the variance of  estimator   $\widetilde{q}_{ii}$ is lower than that of  $\widehat{q}_{ii}$. This suggests that, for a fixed number of samples \(l\), \( \widetilde{q}_{ii}\) is likely to yield a closer approximation to the true value   \(q_{ii}\) than \( \widehat{q}_{ii}\).
\end{theorem}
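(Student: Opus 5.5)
In a balanced signed graph the weights become trivial, so both estimators reduce to sample means of bounded per-forest random variables. The plan is to compare their single-sample variances directly, using the lower bound $q_{ii}\ge \frac{1}{1+d_i}$ from Lemma~\ref{le-pro}.

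\textbf{Step 1: simplify the estimators.} Since $\calG$ is balanced it contains no negative cycle, so every $\phi\in\calF$ has $n^-(\phi)=0$ and $w(\phi)=1$. Hence $\widehat{w}_l(\calF)=\frac{|\calF|}{l}\sum_{k=1}^l 1=|\calF|$ is deterministic. Each off-diagonal estimator therefore becomes a plain sample mean,
\[
\widehat{q}_{ij}=\frac{1}{l}\sum_{t=1}^l {\rm sign}(P_{ij})\,\mathbb{I}_{\{r_{\phi_t}(i)=j\}} ,
\]
and in particular $\widehat{q}_{ii}=\frac1l\sum_t X_t$ with $X_t=\mathbb{I}_{\{i\in\calR(\phi_t)\}}$. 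By Lemma~\ref{le-uniform} the forests $\phi_t$ are i.i.d.\ uniform on $\calF$, and Theorem~\ref{th-qij} gives $\mathbb{E}X_t=q_{ii}$. Consequently
\[
{\rm Var}(\widehat{q}_{ii})=\frac{q_{ii}(1-q_{ii})}{l}.
\]

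\textbf{Step 2: write $\widetilde{q}_{ii}$ as a sample mean.} By linearity, $\widetilde{q}_{ii}=\frac1l\sum_t Y_t$, where
\[
Y_t=\frac{1}{1+d_i}\bigl(1+Z_t\bigr),\qquad Z_t=\sum_{k\ne i} w_{ki}\,{\rm sign}(P_{ik})\,\mathbb{I}_{\{r_{\phi_t}(i)=k\}}.
\]
The key observation is that in any forest node $i$ has exactly one value $r_{\phi_t}(i)$. So at most one indicator in $Z_t$ is nonzero, which gives $Z_t\in\{-1,0,1\}$ and $Z_t^2\le \mathbb{I}_{\{r_{\phi_t}(i)\ne i\}}$. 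It follows that
\[
{\rm Var}(Y_t)={\rm Var}(Z_t)/(1+d_i)^2\le \mathbb{E}Z_t^2/(1+d_i)^2\le \frac{1-q_{ii}}{(1+d_i)^2},
\]
and since the $Y_t$ are i.i.d., ${\rm Var}(\widetilde{q}_{ii})\le \frac{1-q_{ii}}{l(1+d_i)^2}$. Unbiasedness of $\widetilde{q}_{ii}$ follows from the identity $q_{ii}=\frac{1}{1+d_i}\bigl(1+\sum_{k\neq i}q_{ik}w_{ki}\bigr)$ together with Lemma~\ref{le-omegal}.

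\textbf{Step 3: compare the two bounds.} By Lemma~\ref{le-pro}, $q_{ii}\ge \frac1{1+d_i}\ge\frac1{(1+d_i)^2}$. Therefore
\[
\frac{1-q_{ii}}{(1+d_i)^2}\le q_{ii}(1-q_{ii}),
\]
that is, ${\rm Var}(\widetilde{q}_{ii})\le{\rm Var}(\widehat{q}_{ii})$.

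The comparison is strict whenever $d_i\ge1$ and $q_{ii}<1$. In the degenerate case $d_i=0$ we have $q_{ii}=1$ and both variances vanish, so the claim should be read as "no larger". The main point to handle carefully is the bound $|Z_t|\le 1$, which rests on the uniqueness of the root of $i$ in each forest.
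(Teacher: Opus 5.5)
Your proof is correct, but it reaches the inequality by a different route than the paper. Both arguments work with a single forest and rest on the uniqueness of $r_{\phi}(i)$. The paper uses it to make the cross terms $\mathbb{E}(\widehat{q}_{ik}\widehat{q}_{is})$ vanish; you use it to get $|Z_t|\le 1$.

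The paper then computes ${\rm Var}(\widetilde{q}_{ii})$ exactly. It invokes the balanced-graph identity $\mathbb{E}(\widehat{q}_{ik}^2)=|q_{ik}|=q_{ik}w_{ki}$ for in-neighbors $k$ of $i$, i.e.\ ${\rm sign}(P_{ik})=w_{ki}$. This means your $Z_t$ is actually $\{0,1\}$-valued, with $\mathbb{E}Z_t^2=\mathbb{E}Z_t=(1+d_i)q_{ii}-1$. The paper obtains ${\rm Var}(\widetilde{q}_{ii})=\frac{3q_{ii}}{1+d_i}-\frac{2}{(1+d_i)^2}-q_{ii}^2$ and shows the gap equals $\frac{2(1-q_{ii})+d_i(d_i-1)q_{ii}}{(1+d_i)^2}$. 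That gap is nonnegative using only $q_{ii}\le 1$ and the integrality of $d_i$.

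You instead drop $(\mathbb{E}Z_t)^2$ and use only $|Z_t|\le 1$ and $Z_t=0$ when $i\in\calR(\phi_t)$. You close with the lower bound $q_{ii}\ge\frac{1}{1+d_i}$ of Lemma~\ref{le-pro}. So balance enters through that lemma, whose proof rests on the same sign fact $q_{ik}w_{ki}\ge 0$, rather than through the moment computation.

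Your version is shorter and needs no sign bookkeeping in the expansion. It also yields the multiplicative bound ${\rm Var}(\widetilde{q}_{ii})\le\frac{{\rm Var}(\widehat{q}_{ii})}{q_{ii}(1+d_i)^2}\le\frac{{\rm Var}(\widehat{q}_{ii})}{1+d_i}$, which makes explicit that the gain grows with the degree. The paper's version gives the exact variance and gap without needing the lower bound on $q_{ii}$. For instance, it shows ${\rm Var}(\widetilde{q}_{ii})=0$ when $q_{ii}=\frac{1}{1+d_i}$, which your upper bound does not detect.
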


Theorem~\ref{th-var} theoretically demonstrates that in balanced signed graphs, the \textsc{FMDE+} algorithm outperforms \textsc{FMDE} due to its use of estimators with reduced variance. In the experimental section below, we will show that \textsc{FMDE+} also achieves superior accuracy compared to \textsc{FMDE} even in unbalanced signed graphs.

Utilizing Theorem~\ref{th-Own},  the time complexity of Algorithm~\ref{alg-FMDE} is $O(ln)$, where $l$ is the number of  generalized spanning converging forests.  As we increase the number of sampled  forests $l $, we observe a corresponding decrease in the estimation error between $\widehat{\qq}[i]$ and the actual value $q_{ii}$. To quantify this relationship, we introduce Theorem~\ref{th-l}, which specifies the necessary size of $l$ to achieve a necessary  error guarantee with a high probability.

\begin{theorem}\label{th-l}
    Define $\alpha   = \max \{2^{n^-(\phi)} : \phi \in \calF \}$ and $\beta = {\sum_{\phi \in \calF}  2^{n^-(\phi)}}/{|\calF|}$.   For any node $i\in V$, and parameters $\epsilon, \delta \in (0,1)$, if $l$ is chosen obeying $l = \left \lceil \frac{1}{2} \frac{\alpha^2}{\beta^2} (\frac{\epsilon +2 }{\epsilon})^2 \log(\frac{2}{\delta})   \right \rceil$, then the following inequalities hold with probability at least $1 - \delta$:  
    \begin{equation}\label{ineq1}
        \mathbb{P}(|\widehat{w}_l(\calF)-w(\calF)| \ge |\calF| \frac{\epsilon \beta }{ 2+ \epsilon }) < \delta.
    \end{equation}
    \begin{equation}\label{ineq2}
        \mathbb{P}(|\widehat{w}_l(\calF_{ii})-w(\calF_{ii})| \ge |\calF| \frac{\epsilon \beta }{ 2+ \epsilon }) < \delta.
    \end{equation}
 If the following inequalities hold, then the approximation $\widehat{  \qq  }[i]$ of $ q_{ii}$  returned by  Algorithm~\ref{alg-FMDE} satisfies the following relation: $q_{ii}	- \epsilon \leq 	\widehat{\qq}[i]   \leq    q_{ii} + \epsilon$.
\end{theorem}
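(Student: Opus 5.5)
We prove the two concentration inequalities with Hoeffding's inequality. Then we convert them into an additive error bound for the ratio estimator $\widehat{\qq}[i]=\widehat{w}_l(\calF_{ii})/\widehat{w}_l(\calF)$ by elementary manipulations of the ratio.

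\textbf{Concentration.} By Lemma~\ref{le-uniform}, the $l$ forests $\phi_1,\dots,\phi_l$ are i.i.d.\ uniform on $\calF$. Set $X_k = 2^{n^-(\phi_k)}$ and $Y_k = 2^{n^-(\phi_k)}\mathbb{I}_{\{r_{\phi_k}(i)=i\}}$; for the diagonal the sign factor is $+1$, since the path from $i$ to itself is trivial.
\begin{itemize}
\item Each $X_k$ lies in $[1,\alpha]\subseteq[0,\alpha]$ and each $Y_k$ lies in $[0,\alpha]$.
\item By Lemma~\ref{le-omegal}, $\mathbb{E}X_k=w(\calF)/|\calF|=\beta$ and $\mathbb{E}Y_k=w(\calF_{ii})/|\calF|$.
\item Dividing the target deviation by $|\calF|$, the event in \eqref{ineq1} is $|\bar X-\beta|\ge t$ with $t=\epsilon\beta/(2+\epsilon)$, and likewise for $\bar Y$ in \eqref{ineq2}.
\item Hoeffding gives $\mathbb{P}(|\bar X-\beta|\ge t)\le 2\exp(-2lt^2/\alpha^2)$.
\item With the stated $l$ we have $2lt^2/\alpha^2 \ge \log(2/\delta)$, so this bound is at most $\delta$.
\end{itemize}
Ceiling effects make the inequality non-strict in the worst case; this is only a cosmetic point.

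\textbf{Ratio error.} On the event where both deviations are below $|\calF|t$, write $W=w(\calF)=|\calF|\beta$, $W_i=w(\calF_{ii})$, $\eta=\epsilon/(2+\epsilon)$, and $\widehat W$, $\widehat W_i$ for the estimators. Then $|\widehat W-W|<\eta W$ and $|\widehat W_i-W_i|<\eta W$.

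For the upper bound,
\[
\frac{\widehat W_i}{\widehat W}<\frac{W_i+\eta W}{(1-\eta)W}=\frac{q_{ii}+\eta}{1-\eta}.
\]
Using $q_{ii}\le 1$ from Lemma~\ref{le-pro}, this is at most $q_{ii}+2\eta/(1-\eta)$. Since $1-\eta=2/(2+\epsilon)$, we get $2\eta/(1-\eta)=\epsilon$, so the upper bound is exactly $q_{ii}+\epsilon$.

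For the lower bound,
\[
\frac{\widehat W_i}{\widehat W}>\frac{q_{ii}-\eta}{1+\eta}\ge q_{ii}-\frac{2\eta}{1+\eta}\ge q_{ii}-\epsilon,
\]
where the middle step uses $q_{ii}\le1$ together with $q_{ii}-\eta\ge (1+\eta)q_{ii}-2\eta$.

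\textbf{Main obstacle.} The step needing care is the ratio argument: the additive guarantee must come out exactly as $\epsilon$. This works only because the deviation tolerance for both numerator and denominator is normalized by $W$, not by $W_i$, which is the role of the factor $|\calF|\beta$. It also relies on $q_{ii}\le1$ from Lemma~\ref{le-pro}. A further point to note is that $\widehat W>0$ always holds, since each $X_k\ge1$, so the ratio is well defined.
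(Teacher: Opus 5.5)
Your proposal is correct and follows essentially the paper's proof: Hoeffding's inequality with range $[0,|\calF|\alpha]$ for both estimators (your normalization to $[0,\alpha]$ is the same step), followed by a ratio bound that uses $w(\calF_{ii})\le w(\calF)$, $\widehat{w}_l(\calF)\ge(1-\eta)w(\calF)$ with $\eta=\epsilon/(2+\epsilon)$, and the identity $2\eta/(1-\eta)=\epsilon$; the paper does the ratio step in one shot via a common denominator and the triangle inequality, whereas you give separate upper and lower bounds. One small point: your lower-bound display (numerator lower bound divided by denominator upper bound) is justified as written only when $q_{ii}\ge\eta$, and when $q_{ii}<\eta$ it instead follows trivially from $\widehat{W}_i\ge 0$, a case split the paper's common-denominator form avoids.
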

In real-life networks, the percentage of negative edges is extremely small~\cite{ChHsNaDhTe14}. Moreover, it is believed that a signed social network evolves towards a balanced state; otherwise, a state of unbalance will produce tension~\cite{SiAd17}. Note that when there are no negative edges or $\calG$ constitutes a balanced signed graph, the factor $\alpha/\beta $ equals 1. Consequently, according to Theorem~\ref{th-l}, the required sample size \(l\) will not become excessively large due to an expansion in the ratio $\alpha/\beta$. This ensures that our sampling algorithm remains efficient and practical for applications in real-life networks.

\subsection{Expressed Opinion Estimation in Signed Friedkin-Johnsen Model}

The Friedkin-Johnsen (FJ) model is a popular model for analyzing opinion evolution and formation on graphs~\cite{FrJo90,BiKlOr15,HeZhLiRu20,RaHo21}. In the signed FJ model, each node  $i\in V$ is associated with two types of opinions: the internal opinion and the expressed opinion. In the signed FJ model, each node \( i \in V \) has an internal opinion \( s_i \in [-1,1] \) and an expressed opinion \( z_i(t) \) at time \( t \). At time $t+1$, the expressed opinion evolves according to $z_i(t+1) = \frac{1}{1+d_i}({s_i +\sum_{j\in N(i)}w_{ij}z_j(t)}) $.
Let \( \sss = (s_1, s_2, \ldots, s_n)^\top \) be the internal opinion vector. The expressed opinion vector converges to an equilibrium \( \zz = (z_1, z_2, \ldots, z_n)^\top \) satisfying $\zz = (\II+\LL)^{-1}\sss = \QQ\sss$. 

While the signed FJ model has been widely studied~\cite{XuHuWu20,RaHo21,HeZhLiRu20,HeZeZhLi22,TaChAgLi16,HaBhPa24}, efficient algorithms for estimating expressed opinions in directed signed graphs are lacking due to the difficulty of estimating the forest matrix.

Using the predefined estimator, we can approximate the $i$-th expressed opinion $z_i$. Since $z_i = \sum_{j=1}^n q_{ij} s_j$ and $\widehat{q}_{ij}$ serves as an estimator for $q_{ij}$, we define $\widehat{z_i} = \sum_{j=1}^n \widehat{q}_{ij} s_j$ as the estimator for $z_i$. Specifically, we first sample a set of $l$ generalized spanning converging forests, stored in a forest list $L$, using Algorithm~\ref{alg-grf}. This sampling procedure incurs a time and space complexity of $O(ln)$.

To estimate the expressed opinion of a specific node, we traverse the forest list to compute $\widehat{z_i}$, which requires only $O(l)$ time. This process is detailed in the following algorithm, \textsc{FJOE} (Friedkin-Johnsen Opinion Estimation). Notably, while the internal opinion vector $s$ may change, resampling the forest is unnecessary as long as the graph structure remains unchanged. This allows our algorithm to efficiently query the expressed opinion.

By setting $l = O(\frac{\alpha^2}{\beta^2} \cdot \frac{1}{\epsilon^2} \log\left(\frac{1}{\delta}\right))$, following the approach  in Theorem~\ref{th-l}, we can guarantee that the estimation error satisfies  $|\widehat{z_i} - z_i| \leq \epsilon$ with a probability of at least $1 - \delta$.

\section{Experiments}
 \subsection{Setup}
 \textbf{Dataset.}
 The datasets of selected real networks are publicly available in the KONECT~\cite{Ku13} and SNAP~\cite{LeSo16}.  Our experiments are conducted on a diverse range of networks. Details of these datasets are presented in Table~\ref{datasets}. We utilize both original signed graphs and modified signed graphs for our experiments. The modified signed graphs are generated from real unsigned graphs by randomly assigning a negative sign to each edge with a probability of 0.2. These  modified signed graphs are denoted with a superscript asterisk in Table~\ref{datasets}.



\noindent\textbf{Algorithms.}
To evaluate the performance of our algorithms in estimating the diagonal elements for forest matrix of signed graphs, we compare our two proposed algorithms, \textsc{FMDE} and \textsc{FMDE+}, against the ground truth, which is obtained by directly inverting the matrix $\II+\LL$.  Additionally, we evaluate the accuracy of \textsc{FJOE} by performing 100 random queries and comparing the results with the ground truth.

\subsection{  Forest Matrix Diagonal Estimation }
\subsubsection{Accuracy} We first   evaluate the accuracy of our algorithms \textsc{FMDE} and \textsc{FMDE+} with the ground truth. To this end, we conduct experiments on six small-sized networks, as obtaining the ground truth by inverting the matrix $\II+\LL$ is computationally intensive and memory-consuming for larger graphs. The details of these networks: Adolescent$^*$, Bitcoinotc, Gnutella08$^*$, Wikielec,  Wikipedia$^*$, and SlashdotZoo are listed in Table~\ref{datasets}. Of these, three are original signed graphs, while the remaining three, marked with a superscript asterisk, are modified signed graphs.

 To evaluate the accuracy of our two algorithms, we use the average relative error across all nodes. For each signed graph $\calG=(V,E,w)$, we initially compute the forest matrix $\QQ = (\II+\LL)^{-1}$ to obtain its diagonal $\qq$.  Our algorithms, \textsc{FMDE} and \textsc{FMDE+}, then estimate the diagonal, resulting in $\widehat{\qq}$ and $\widetilde{\qq}$, respectively. The average relative error for algorithm \textsc{FMDE} is calculated using $\frac{1}{n}\sum_{i=1}^n\frac{|\qq_i - \widehat{\qq}_i|}{\qq_i}$, and similarly for \textsc{FMDE+}. We set $\epsilon = 0.1, 0.2, 0.3$ to examine performance under these settings, with the results depicted in Figure~\ref{f1}.

\begin{figure}[htbp!]
	\centering
	\includegraphics[width=1\columnwidth]{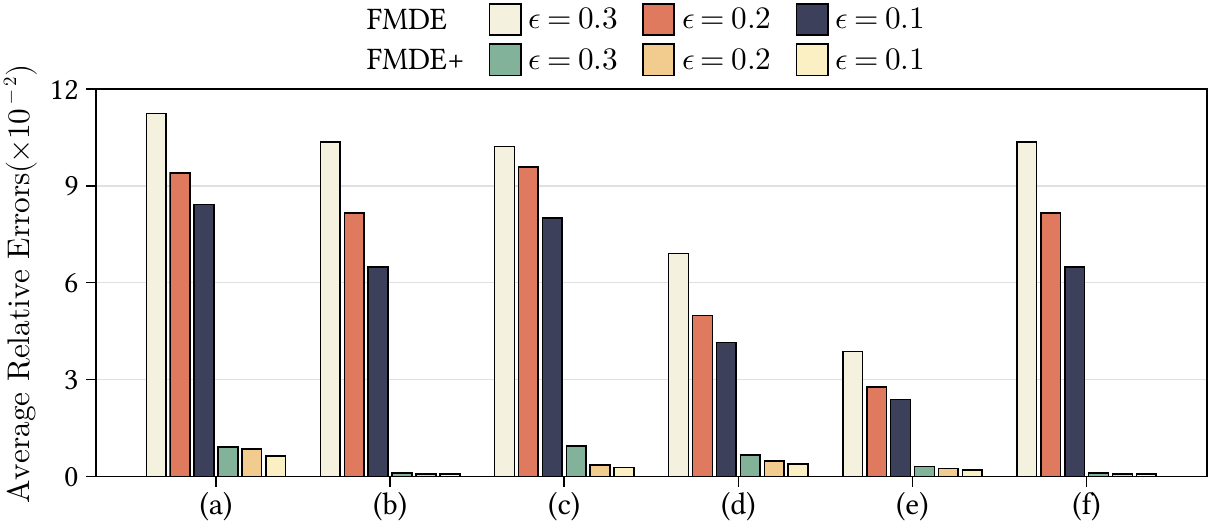}
	\caption{Comparison of  average   relative errors of the diagonals for  algorithms \textsc{FMDE} and \textsc{FMDE+} on six   graphs: Bitcoinotc(a), Wikielec(b), SlashdotZoo(c), Adolescent$^*$(d),  Gnutella08$^*$(e), Wikipedia$^*$(f)  across three different settings of  $\epsilon$.  }\label{f1}	
\end{figure}

The results displayed in Figure~\ref{f1} demonstrate that as \(\epsilon\) decreases, the number of samples increases, which consequently reduces the average relative error. This trend is consistent for both algorithms, \textsc{FMDE} and \textsc{FMDE+}. Furthermore, despite the distinction between original signed graphs (a), (b), (c) and modified signed graphs (d), (e), (f), the performance outcomes are comparably robust. Notably, \textsc{FMDE+} significantly outperforms \textsc{FMDE} in terms of accuracy, achieving results approximately ten times better. Specifically, the average relative error for \textsc{FMDE+} remains below 0.01 across all tested graphs. In particular instances, such as in graphs (b) and (f), the error margin even drops below 0.001 for all three \(\epsilon\) settings. This marked improvement is attributed to the enhancements incorporated in \textsc{FMDE+}, which employs a superior estimator as theoretically detailed in previous sections. In conclusion, the results returned by the \textsc{FMDE+} algorithm are more convincing and exhibit high accuracy.

\subsubsection{Efficiency and Scalability} We now demonstrate that our algorithms, \textsc{FMDE} and \textsc{FMDE+}, are more efficient than the direct matrix inversion method, referred to here as \textsc{EXACT}. To illustrate this, Table~\ref{tb-time} compares the performance of \textsc{EXACT}, \textsc{FMDE}, and \textsc{FMDE+}. The results indicate that for the first six small-sized graphs, both \textsc{FMDE} and \textsc{FMDE+} significantly outperform \textsc{EXACT} in terms of computational speed for all three \(\epsilon\) settings chosen. Furthermore, it is observed that as \(\epsilon\) decreases, the running time increases. Besides, algorithm \textsc{FMDE+}, requires slightly more time than \textsc{FMDE} for a fixed \(\epsilon\) due to its need to collect additional information, as outlined in Algorithm~\ref{alg-FMDE}.

\begin{table}[htbp!]\small\fontsize{7}{12}\caption{Running time (seconds)  of algorithms \textsc{EXACT}, \textsc{FMDE} and \textsc{FMDE+}.  }\label{tb-time}
\setlength{\tabcolsep}{.8 mm} 
\begin{tabular}{cccccccc}
\hline
\multirow{3}{*}{Network} & \multicolumn{7}{c}{Time(seconds)}                                                     \\ \cline{2-8} 
                         & \multirow{2}{*}{EXACT} & \multicolumn{3}{c}{FMDE}         & \multicolumn{3}{c}{FMDE+} \\ \cline{3-8} 
                         &                        & $\epsilon$ = 0.3 & 0.2   & 0.1   & 0.3     & 0.2    & 0.1    \\ \hline
Adolescent          & 0.30                   & 0.019            & 0.023 & 0.031 & 0.046   & 0.063  & 0.095  \\
Bitcoinotc               & 1.87                   & 0.039            & 0.055 & 0.070 & 0.072   & 0.118  & 0.165  \\
Gnutella08           & 2.28                   & 0.025            & 0.041 & 0.060 & 0.032   & 0.044  & 0.064  \\
Wikielec            & 3.01                   & 0.031            & 0.045 & 0.062 & 0.041   & 0.052  & 0.098  \\
Wikipedia           & 22.96                  & 0.066            & 0.106 & 0.131 & 0.410   & 0.677  & 0.874  \\
SlashdotZoo              & 991.5                  & 0.361            & 0.464 & 0.791 & 0.487   & 0.833  & 1.033  \\
Epinions                 & -                      & 0.377            & 0.587 & 0.854 & 0.345   & 0.616  & 1.141  \\
WikiL                    & -                      & 0.701            & 1.535 & 2.036 & 0.732   & 1.577  & 2.222  \\
Youtube                  & -                      & 4.656            & 10.19 & 11.45 & 6.378   & 13.68  & 22.13  \\
Dblp                     & -                      & 53.55            & 168.8 & 269.3 & 108.2   & 325.5  & 518.0  \\
Livejournal              & -                      & 459.0            & 1031  & 1623  & 604.1   & 1311   & 2114   \\
FullUSA                  & -                      & 918.8            & 1832  & 2803  & 1204    & 2015   & 3598   \\ \hline
\end{tabular}
\end{table}

However, for the large graphs \textsc{EXACT} is  unable to execute due to time and memory constraints. In contrast, \textsc{FMDE} and \textsc{FMDE+} continue to perform efficiently on these networks. Notably, both algorithms are scalable to  massive networks with more than twenty million nodes, such as FullUSA, which has over \(2.3 \times 10^7\) nodes. Remarkably,  both algorithms deliver results for our three $\epsilon$ settings within at most one hour.  Thus, \textsc{FMDE} and \textsc{FMDE+} not only provide accurate estimation of the diagonal elements of the forest matrix but also demonstrate remarkable efficiency and scalability to extensive graph sizes.

\subsection{   Opinion Estimation in   Signed FJ Model }
In this subsection, we evaluate the accuracy of \textsc{FJOE} by performing 100 random queries and comparing the results with the ground truth. We vary the $\epsilon$ values at 0.3, 0.2, and 0.1, and determine $l$ based on Theorem~\ref{th-l}. The \textsc{FJOE} algorithm requires pre-sampling of $l$ generalized spanning converging forests, and its running time is comparable to that of FMDE, as shown in Table~\ref{tb-time}. The ground truth is obtained via matrix inversion, with computational time similar to that of EXACT in Table~\ref{tb-time}.  We present the running time of \textsc{FJOE} along with the average absolute error of the opinions computed over 100 random queries. The details are summarized in Table~\ref{tb-time1}.
 
\begin{table}[htbp!]\fontsize{8}{12}\caption{Running time($\times 10^{-4}$ seconds) and absolute error ($\times 10^{-2}$) of algorithm \textsc{FJOE}   }\label{tb-time1}
\setlength{\tabcolsep}{1.2 mm} 
 \begin{tabular}{ccccccccc}
\hline
\multirow{2}{*}{Network} &  & \multicolumn{3}{c}{Time for FJOE } &  & \multicolumn{3}{c}{Absolute Error } \\ \cline{3-5} \cline{7-9} 
                         &  & $\epsilon$ =0.3         & 0.2         & 0.1          &  & 0.3              & 0.2             & 0.1             \\ \hline
Adolescent                &  & 1.1                     & 3.2         & 7.3          &  & 2.1              & 1.5             & 0.9             \\
Bitcoinotc               &  & 1.2                     & 4.2         & 9.5          &  & 3.2              & 2.3             & 1.6             \\
Gnutella08           &  & 1.1                     & 3.2         & 8.5          &  & 0.7              & 0.2             & 0.1             \\
Wikielec            &  & 2.0                     & 5.2         & 7.4          &  & 1.2              & 0.9             & 0.7             \\
Wikipedia           &  & 2.5                     & 6.1         & 8.2          &  & 3.1              & 2.9             & 1.5             \\
SlashdotZoo              &  & 2.1                     & 5.0         & 7.2          &  & 2.1              & 1.2             & 0.4             \\
Epinions                 &  & 2.4                     & 4.9         & 8.2          &  & -                & -               & -               \\
WikiL                    &  & 1.9                     & 5.1         & 9.5          &  & -                & -               & -               \\
Youtube                  &  & 2.5                     & 6.5         & 10.2         &  & -                & -               & -               \\
Dblp                     &  & 2.8                     & 5.8         & 9.8          &  & -                & -               & -               \\
Livejournal              &  & 2.7                     & 6.0         & 10.2         &  & -                & -               & -               \\
FullUSA                  &  & 2.9                     & 7.2         & 11.5         &  & -                & -               & -               \\ \hline
\end{tabular}
\end{table}
Table~\ref{tb-time1} shows that as $\epsilon$ decreases, the corresponding $l$ increases, leading to smaller absolute errors. Our algorithm, \textsc{FJOE}, operates with a time complexity of $O(l)$ and demonstrates exceptionally fast performance. For instance, on the largest graph, FullUSA, the running time is less than $3 \times 10^{-4}$ seconds when $\epsilon = 0.3$. This indicates that the algorithm's efficiency remains largely unaffected by the growth in graph size. Furthermore, the absolute error remains under $2\times10^{-2}$  for the first six graphs when $\epsilon = 0.1$, demonstrating that the algorithm achieves high accuracy even on diverse network structures.

\section{Conclusions}
In this paper, we addressed the problem of fast estimation of the forest matrix of a signed graph.  We first  introduced the signed  forest matrix  theorem, which provides crucial insights into the properties of the forest matrix. Then we  proposed  a novel algorithm  $\textsc{GSCF}$  for generating generalized spanning converging forests, serving as the cornerstone for subsequent algorithms.  Furthermore we developed two efficient sampling algorithms, $\textsc{FMDE}$ and $\textsc{FMDE+}$, designed to estimate the diagonal of the forest matrix. $\textsc{FMDE+}$, in particular, incorporates more comprehensive information, resulting in superior performance both theoretically and experimentally. We also proposed an algorithm \textsc{FJOE} to estimate the expressed opinion of individuals in the signed FJ model.
Finally, we conducted   extensive experiments   on various signed graphs, which demonstrated that our algorithms are not only effective and efficient but also scalable to massive networks   with more than twenty million nodes. 

In future work, we aim to extend our algorithms to address additional challenges on signed graphs, including optimization problems related to the forest matrix and graph embedding tasks for signed networks. These extensions will further enhance the utility of our framework for understanding and modeling complex signed interactions in large-scale social and information networks.

\section*{Acknowledgements}
The work was supported by the National Natural Science Foundation of China (Nos. 62372112 and 61872093).

\section*{Impact Statement}

This paper advances the efficient analysis of signed graphs and signed network models. The proposed methods provide scalable tools for estimating forest matrix quantities and studying opinion dynamics in networks with positive and negative relationships. We do not identify specific ethical or societal risks beyond those generally associated with machine learning, network analysis, and computational social science.

\bibliography{signFJ,kedges,newref,expressedopinion}
\bibliographystyle{icml2026}

\newpage
\appendix
\onecolumn

\section{Proofs}
In this section, we provide proofs of selected lemmas and theorems.

\subsection{Proof of Lemma~\ref{th-wF}}
\begin{proof}
    We define the function $\pi : V \mapsto V$ as a permutation of the node set $V=\{1,\ldots,n\}$, and use $\mathcal{P}(V)$ to denote the set of all permutations of set $V$.  We use $N(\pi)$ to denote the inversion number of $\pi$, that is $N(\pi) = |\{(i,j): i< j, \pi(i) > \pi(j)\}|$. Each permutation $\pi$ can be decomposed into disjoint cycles $C_1,C_2\ldots,C_{n(\pi)}$, where $n(\pi)$ represents the number of cycles in the decomposition. Let $n^-(\pi)$ and $n^+(\pi)$ denote the number of non-trivial negative and positive cycles in $\pi$, respectively.

    From the definition of determinant, we obtain that 
        \begin{equation}\label{th1-eq2}
 \det(\II+\LL) = \sum_{\pi\in\mathcal{P}(V)} (-1)^{N(\pi)}\prod_{i\in V, \pi(i)=j}\ee_{i}^{\top}(\II+\LL)\ee_{j}. 
    \end{equation}
 For a cycle $C_i$ belonging to $\pi$, its inversion number is ${|C_i|-1}$. Then we have $ (-1)^{N(\pi)} = \prod_{k = 1}^{n(\pi)}(-1)^{|C_k|-1} $. Rewriting the determinant we obtain:
        \begin{equation}
 = \sum_{\pi\in\mathcal{P}(V)} \prod_{k = 1}^{n(\pi)}(-1)^{|C_k|-1}\prod_{i: \pi(i)=i}(1+ \sum_{j\neq i}\abs{w_{ij}})\prod_{i: \pi(i)=j,i\neq j}(-w_{ij}).
    \end{equation}
We simplify the product terms further:
\begin{equation}
\begin{aligned}
&= \sum_{\pi\in\mathcal{P}(V)}(-1)^{n^-(\pi)+n^+(\pi)}\prod_{i:\pi(i)=i}(1+ \sum_{j\neq i}\abs{w_{ij}})\prod_{i:\pi(i)=j,i\neq j}w_{ij}\\
        &= \sum_{\pi\in\mathcal{P}(V)}(-1)^{n^+(\pi)}\prod_{i:\pi(i)=i}(1+ \sum_{j\neq i}\abs{w_{ij}})\prod_{i:\pi(i)=j,i\neq j}\abs{w_{ij}}\\ 
\end{aligned}
\end{equation}
For a permutation $\pi$, let $P(\pi) = \{i\in V:\pi(i) = i\} $ be the  set of fixed points. We now define a set of mappings $\mathcal{M}(\pi)$. For a mapping $\widehat{\pi}\in \mathcal{M}(\pi), \widehat{\pi}: V\mapsto V$, it satisfies 
    \begin{equation}\label{eq3}
         \widehat{\pi}(i)= \left\{\begin{matrix}
j   & i\in P(\pi), j \in \{i\}\cup N_i,\\
\pi(i)  & i\notin P(\pi).
\end{matrix}\right.
    \end{equation} 
For each permutation $\pi \in \mathcal{P}(V)$ and corresponding mapping $\widehat{\pi} \in \mathcal{M}(\pi)$, we define an induced spanning subgraph $\widehat{\calG}(\widehat{\pi}) = (V, E(\widehat{\pi}), w)$, where $E(\widehat{\pi}) = {(i, j) : \widehat{\pi}(i) = j, i \neq j, i \in V}$. We can then express the determinant as follows:
\begin{equation}
\begin{aligned}
        \det(\II+\LL) &=  \sum_{\pi\in\mathcal{P}(V)}\sum_{\widehat{\pi}\in \mathcal{M}(\pi)}(-1)^{n^+(\pi)}\prod_{i:\widehat{\pi}(i)=j,i\neq j} \abs{w_{ij}}\\& = \sum_{\pi\in\mathcal{P}(V)}\sum_{\widehat{\pi}\in \mathcal{M}(\pi)}(-1)^{n^+(\pi)}  .
\end{aligned}
\end{equation}
We then rearrange the sum order of $\pi$ and $\widehat{\pi}$:
\begin{equation}
=\sum_{\widehat{\pi}} \sum_{\pi: \widehat{\pi}\in \mathcal{M}(\pi)}(-1)^{n^+(\pi)}. 
\end{equation}

For any non-trivial cycle in $\widehat{\calG}(\widehat{\pi})$, it either  belongs to the decomposition of $\pi$ or not. Let $n^+(\widehat{\pi})$ and $n^-(\widehat{\pi})$ denote the number of non-trivial positive and negative cycles in $\widehat{\calG}(\widehat{\pi})$, respectively. Summing over the non-trivial positive and negative cycles in the decompositions, we find
\begin{equation}\label{eq4}
\begin{aligned}
       &\sum_{\pi: \widehat{\pi}\in \mathcal{M}(\pi)}(-1)^{n^+(\pi)} = \sum_{i=0}^{n^+(\widehat{\pi})}\binom{n^+(\widehat{\pi})}{i}(-1)^{i}\sum_{j=0}^{n^-(\widehat{\pi})}\binom{n^-(\widehat{\pi})}{j} 
\\&=  (1-1)^{n^+(\widehat{\pi})}(1+1)^{n^-(\widehat{\pi})}=\left\{\begin{matrix}
 0 & n^+(\widehat{\pi})\neq 0, \\
 2^{n^-(\widehat{\pi})} &n^+(\widehat{\pi})=0 .
\end{matrix}\right.
\end{aligned}
\end{equation}
This implies that for a fixed $\widehat{\pi}$,the expression $\sum_{\pi : \widehat{\pi} \in \mathcal{M}(\pi)} (-1)^{n^+(\pi)}$ equals $2^{n^-(\widehat{\pi})}$ if and only if $n^+(\widehat{\pi}) = 0$. In this scenario, the induced graph $\widehat{\calG}(\widehat{\pi})$ corresponds to the generalized spanning converging forest previously defined. Hence, we conclude

\begin{equation}
        \det(\II+\LL)    =\sum_{\widehat{\pi}:n^+(\widehat{\pi})=0 }   2^{n^-(\widehat{\pi})} = \sum_{\phi\in \calF} w(\phi) = w(\calF),
\end{equation}
which finishes the proof.\end{proof}

\subsection{Proof of Lemma~\ref{th-wFij}}

\begin{proof}
Similarly to the proof of Lemma~\ref{th-wF}, we now define the function $\pi$ as a  bijection from the node set $V\setminus\{j\}$ to the node set $V\setminus\{i\}$.  We use $N(\pi)$ to denote the inversion number of $\pi$. Notice that the permutation $\pi$ can be  decomposed into a path $P_{ij}$ from node $i$ to node $j$ and disjoint cycles $C_1,C_2\ldots, C_{n(\pi)}$, where $n(\pi)$ denotes the number of cycles in the decomposition. Let $n(P_{ij})$ be the number of nodes in $P_{ij}$. Let $n^-(\pi)$ and $n^+(\pi)$ be the number of non-trivial negative and positive cycles of $\pi$ respectively. We use ${\rm sign}(P_{ij})$ to denote the sign of the product of the arcs in the path $P_{ij}$. Then we obtain  that,
    \begin{equation}
        \prod_{i:\pi(i)=j,i\neq j}w_{ij} =  {\rm sign}(P_{ij})(-1)^{n^-(\pi)}\prod_{i:\pi(i)=j,i\neq j}\abs{w_{ij}}.
    \end{equation}
To obtain the  inversion number of $\pi$, we first define a mapping $\pi'$ mapping   node $j$ to node $i$. Then the mapping $\pi\oplus\pi'$ is a permutation of set $V$. Thus one obtains that
   \begin{equation}
       (-1)^{N(\pi\oplus\pi')} =(-1)^{n(P_{ij})} \prod_{k = 1}^{n(\pi)}(-1)^{|C_k|-1}.
   \end{equation}
Since the change of inversion number after adding $\pi'$ has the same parity as $i+j-1$, one has
   \begin{equation}
       (-1)^{N(\pi)} =(-1)^{i+j}(-1)^{n(P_{ij})-1} \prod_{k = 1}^{n(\pi)}(-1)^{|C_k|-1}.
   \end{equation}
Following   similar steps in the proof of Lemma~\ref{th-wF}, one gets that
\begin{equation}
     \det(\II+\LL)_{-j,-i} = (-1)^{i+j}\sum_{\phi\in\calF_{ij}}{\rm sign}(P_{ij})w(\phi) = (-1)^{i+j}w(\calF_{ij}),
\end{equation}
which completes the proof.
\end{proof}



\subsection{Proof of Lemma~\ref{le-pro}}

\begin{proof}
According to Theorem~\ref{th-qij}, it is straightforward to derive that for any distinct nodes $i, j \in V$, the inequality $0 \leq |q_{ij}| \leq q_{jj} \leq 1$ holds. In scenarios where $\calG = (V, E, w)$ constitutes a balanced signed graph, the graph contains no non-trivial negative cycles. Under such circumstances, the path sign between any pair of nodes $i, j \in V$ is uniformly positive or negative, leading to the equation $\sum_{j=1}^n |q_{ij}| = \frac{\sum_{j=1}^n |w(\calF_{ij})|}{w(\calF)} = 1$. Moreover, leveraging the relation $\QQ(\II+\LL) = \II$, we  obtain that for any node $i\in V$, $1 = (1+d_i)q_{ii} - \sum_{k\neq i} q_{ik}w_{ki}$. That is, $q_{ii } = \frac{1}{1+d_i}(1+\sum_{k\neq i}q_{ik}w_{ki}) \leq \frac{1}{1+d_i}(1+\sum_{k\neq i}|q_{ik}|) =  \frac{1}{1+d_i}(1+q_{ii})$, which can be simplified to $q_{ii}\leq \frac{2}{2+d_i}$. Moreover,  in this case, $q_{ik}w_{ki}$ must be non-negative, leading to the fact that $q_{ii}\geq \frac{1}{1+d_i}$, which finishes the proof. 
\end{proof}

\subsection{Proof of Theorem~\ref{th-qij}}


\begin{proof}
According to Lemma~\ref{th-wF} and Lemma~\ref{th-wFij}, we obtain  that
\begin{equation}
    q_{ij} = \frac{(-1)^{i+j}\det(\II+\LL)_{-j,-i}}{\det(\II+\LL)} = \frac{w(\calF_{ij})}{w(\calF)},
\end{equation}
which finishes the proof.
\end{proof}

\subsection{Proof of Lemma~\ref{le-indpdt}}

\begin{proof}

We first introduce some notations. For a signed graph $\mathcal{G} = (V,E,w)$ and a node $i\in V$, we define $t_i$ as a random variable that takes values from the set $\{-1\}\cup N(i)$, where the probability of $t_i = -1$ is $\frac{1}{1+d_i}$, and the probability of $t_i = u$ for any node $u\in N(i)$ is also $\frac{1}{1+d_i}$.  Then we define a matrix $\TT^L = (t^L_{ij})_{n\times L}$.  The entry $t^L_{ij}$ in row $i$ and column $j$ of the matrix $\TT^L$ is a random variable that is independently and identically distributed with $t_i$.

We can utilize the matrix $\TT^L$ to determine the next node to visit during the random walk process in Algorithm~\ref{alg-grf}. To be more specific, we begin by defining a vector $\hh = (h_i)_{n\times 1}$, where $h_i$ is initialized to $1$ at the start of our algorithm. During the random walk process, suppose the walk is currently at node $i$, and we need to select the next target node. We set $j=h_i$, and then look at the $j$-th column of the matrix $\TT^L$ corresponding to node $i$. The entry $t^L_{ij}$ in this column represents the next node to visit. If $t_{ij}^L=-1$, we designate node $i$ as the new root node. Otherwise, if $t_{ij}^L=u$, where $u$ is a node adjacent to $i$, we proceed to node $u$ for the next step of the walk. After selecting the next target node, we update $h_i$ to $h_i + 1$. When Algorithm~\ref{alg-grf} terminates, we obtain a vector $\hh$. We can measure the time complexity of Algorithm~\ref{alg-grf} by computing the $\ell_1$-norm of $\hh$, denoted by $\norm{\hh}_1$, which is simply the sum of all elements in $\hh$, i.e., $\sum_{i=1}^n h_i$.

In Algorithm~\ref{alg-grf}, we perform the loop-erasure operation if a non-trivial positive cycle exists. A cycle with the same nodes may be traversed several times during the algorithm so that it may be erased many times. However, since we use matrix $\TT^L$ to determine the next node to visit, every entry in matrix $\TT^L$ can only form one positive cycle and be erased once.

To denote the cycle $C$ and its position in matrix $\TT^L$, we use the $n$-dimensional vector $\cc = (c_1,\cdots,c_n)^\top$. For any $i\in V$, we have $c_i\in \{0,1,\cdots,L\}$. If $c_i\neq 0$, it means that node $i$ is in the cycle and vice versa. To be more specific, $C$ is composed of edges $(i,t^L_{ic_i})$ for any node $i$ that satisfies $c_i\neq 0$. That is, $C = \bigcup_{i:c_i\neq 0} (i,t^L_{ic_i})$.

    Consider two different permutations of the node set $V$, denoted as $\pi_1$ and $\pi_2$. Given a fixed matrix $\TT^L$ with sufficiently large $L$, we apply Algorithm~\ref{alg-grf} twice using $\TT^L$ to determine the next node to visit. In line 4, we choose the new node based on the order of $\pi_1$ and $\pi_2$, respectively. Once Algorithm~\ref{alg-grf} terminates, we obtain two vectors $\hh$ and $\widehat{\hh}$. We claim that $\hh = \widehat{\hh}$.

Suppose that we erase non-trivial positive cycles  $C^1,\cdots,C^{k}$ in order when we choose the new node based on the order of $\pi_1$.  If $k = 0$, then there is no need for erasing cycles, and in this case  $\hh = \widehat{\hh}$. Now we consider $k>0$, that is, there is at least one positive cycle  to be erased.    For $i = 1,\cdots,k$, we use $\cc^i = (c^i_1,\cdots,c^i_n)^\top$ to denote the  position of cycle $C^i$ in matrix $\TT$. Then for $i\in\{1,\cdots,k-1\}$ and $j\in V$, we have 
\begin{equation}
c^{i+1}_j =\left\{\begin{matrix}
 0 & \text{ if } j\notin C^{i+1}, \\
 \max\{c^{1}_{j},\cdots,c^{i}_{j}\}+1 & \text{ if } j\in C^{i+1}.
\end{matrix}\right.    
\end{equation}
Moreover, for $i\in V$, we have that $h_i = \max\{c^{1}_{i},\cdots,c^{k}_{i}\}+1$.

Now, suppose we choose the new node based on the order of $\pi_2$, and the first non-trivial positive cycle to be erased is $\widehat{C}^1$. Let $\widehat{\cc}^1 = (\widehat{c}^1_1,\cdots,\widehat{c}^1_n)^\top$ denote the position of $\widehat{C}^1$ in matrix $\TT^L$. For $i\in V$, either $\widehat{c}^1_i = 0$ and node $i$ is not in cycle $\widehat{C}^1$, or $\widehat{c}^1_i = 1$ and node $i$ belongs to cycle $\widehat{C}^1$. Since $\widehat{C}^1$ is a non-trivial positive cycle, there exists $i\in \widehat{C}^1$ such that $h_i > 1$. This implies that $\widehat{C}^1$ must have some common nodes with cycles $C^1, \cdots, C^k$ that have the same position in matrix $T$. Suppose $C^i$ is the first cycle that has some common nodes with $\widehat{C}^1$. If $\widehat{\cc}^1 \neq \cc^i$, then there is a common node $j\in \widehat{C}^1 \cap C^i$ such that $\widehat{c}^1_j = 1 \neq c^i_j$. This implies that $c^i_j > 1$, which contradicts the fact that $C^i$ is the first cycle having some common nodes with $\widehat{C}^1$. Therefore, $\widehat{C}$ and $C^i$ must be the same cycle, and $\widehat{\cc}^1 = \cc^i$. In other words, $\widehat{C}^1\in{C^1,\cdots,C^k}$.

Suppose we have erased non-trivial positive cycles $\widehat{C}^1, \ldots, \widehat{C}^u$ based on the order of $\pi_2$, and for $i=1,\ldots,u$, we have $\widehat{C}^i \in {C^1,\ldots,C^k}$. If $u<k$, then the algorithm will not terminate and the next positive cycle to be erased is $\widehat{C}^{u+1}$. Following the previous proof, we can show that $\widehat{C}^{u+1} \in {C^1,\ldots,C^k}$. If $u=k$, then the algorithm terminates. Therefore, if we choose new nodes based on the order of $\pi_2$, only the order of positive loop-erasure will be changed, and we will still have $\hh = \widehat{\hh}$.

As a result, given a fixed matrix $\TT^L$ with sufficiently large $L$, the random walk order will not affect the time complexity or the return result of Algorithm~\ref{alg-grf}. Thus, if we randomly generate $T$, the expected time complexity of Algorithm~\ref{alg-grf} is independent of the random walk order.
\end{proof}

\subsection{Proof of Theorem~\ref{th-Own}}

\begin{proof}
The expected time complexity of Algorithm~\ref{alg-grf} can be expressed as the expected value of the $\ell_1$-norm of $\hh$ when performing Algorithm~\ref{alg-grf} over all possible matrices $\TT^L$. This can be written as $\mathbb{E}\left(\sum_{i=1}^n h_i\right) = \sum_{i=1}^n \mathbb{E}(h_i)$, where the equality follows from the linearity of the expectation. As shown in the proof of Lemma~\ref{le-indpdt}, the expected value of $h_i$, denoted by $\mathbb{E}(h_i)$, is independent of the order in which the random walk starts at each node.

Suppose that  the random walk starts at node $v_1$. We can estimate $\mathbb{E}( h_1)$ as the expected number of times the walk visits node $v_1$ before terminating. Recall that termination occurs either when a negative cycle is encountered or when a root node is added to the branch at a node $u$, with probability $\frac{1}{1+d_u}$. We can derive an upper bound for $\mathbb{E}( h_1)$ by considering the case where the walk only stops when a root node is added,  ignoring the possibility of stopping at negative cycles. In this case, the probability transition matrix is $\PP = (\II+\DD)^{-1}(\AA^+-\AA^-)$. The expected number of visits to node $v_1$ until termination can be calculated as $\lim_{t\rightarrow\infty}\sum_{i=1}^t \ee_1^\top(\II+\PP+\cdots+\PP^t)\ee_1$. Since the walk in Algorithm~\ref{alg-grf} also terminates when encountering negative cycles, we have the following upper bound:
\begin{equation}
\begin{aligned}
        \mathbb{E}( h_1) &\leq \lim _{t\rightarrow \infty}  \ee_{1}^\top(\II+\PP+\cdots+\PP^t)\ee_{1} \\ &= \ee_{1}^T(\II+\DD-\AA^++\AA^-)^{-1}(\II+\DD)\ee_{1}.
\end{aligned}
\end{equation}

After summing the expected number of visits for all nodes, we obtain: $    \sum_{i=1}^n \mathbb{E}( h_i) \leq {\rm trace}((\II+\DD-\AA^++\AA^-)^{-1}(\II+\DD)).$

Let $\widehat{\LL}$ be the matrix $\DD-\AA^++\AA^-$, which is the Laplacian matrix of an unsigned directed graph $\widehat{\calG} = (V,E,\widehat{w})$, where $\widehat{w}_{ij} = \abs{w_{ij}} = 1$. The entry at row $i$ and column $j$ of $\widehat{\LL}$ is denoted by $l_{ij}$. We have $l_{ii} = d_i$ and $l_{ij} = 0$ or $- 1$. Furthermore, the sum of all entries in each row of $\widehat{\LL}$ is equal to $0$.

Matrix $\widehat{\QQ} = (\II +\widehat{\LL})^{-1} = (\widehat{q}_{ij})_{n\times n}$ is the forest matrix on unsigned graph  $\widehat{\calG}$.  From~\cite{SuZh23}, we have $\frac{1}{1+d_i}\leq \widehat{q}_{ii} \leq \frac{2}{2+d_i}$. Then we can derive the following inequality:
\begin{equation}
\begin{aligned}
          \sum_{i=1}^n \mathbb{E}( h_i) &\leq {\rm trace}((\II+\DD-\AA^++\AA^-)^{-1}(\II+\DD))\\ &= \sum_{i=1}^n \widehat{q}_{ii}(1+d_i) \leq \sum_{i=1}^n\frac{2(1+d_i)}{2+d_{i} } \leq 2n.
\end{aligned}
\end{equation}
 As a result,   the expected time complexity of algorithm \ref{alg-grf} is at most $O( n)$.
\end{proof}

\subsection{Proof of Lemma~\ref{le-uniform}}
\begin{proof}
In Algorithm~\ref{alg-grf}, suppose that the random walk is currently at node $i$, and a new step is needed. There are two possible scenarios: either node $i$ becomes a root node,  or the walk moves from node $i$ to a random neighbor $j$. Both events occur with a probability of $\frac{1}{1+d_i}$. Consequently, the probability of obtaining any particular generalized spanning converging forest $\phi_0$ from $\calF$ using Algorithm~\ref{alg-grf} is proportional to $\prod_{i=1}^n \frac{1}{1+d_i}$. Therefore, each forest $\phi_0 \in \calF$ can be generated with equal likelihood, which completes the proof.
\end{proof}

\subsection{Proof of Lemma~\ref{le-omegal}}

\begin{proof}
With Lemma~\ref{le-uniform}, we establish that for each $k = 1, \cdots, l$, the forest $\phi_k$ is uniformly sampled from $\calF$. Consequently, the expected value of the estimator $\widehat{w}_l(\calF)$ is given by:
\begin{equation}
\mathbb{E}(\widehat{w}_l({\calF} ))  =    \mathbb{E}(\frac{|\calF|}{l} \sum_{k=1}^l 2^{n^-(\phi_k)})  = \sum_{\phi \in \calF} 2^{n^-(\phi)} = w(\calF).
\end{equation}

Similarly, the expected value of $\widehat{w}_l(\calF_{ij})$ is calculated as follows:
\begin{equation}
\begin{aligned}
\mathbb{E}(\widehat{w}_l({\calF_{ij}} ))  &=    \mathbb{E}(\frac{|\calF|}{l} \sum_{k=1}^l 2^{n^-(\phi_k)} {\rm sign}(P_{ij}) \mathbb{I}_{\{ r_{\phi_{k}}(i) = j \}})  \\ &= \sum_{\phi \in \calF} 2^{n^-(\phi )} {\rm sign}(P_{ij}) \mathbb{I}_{\{ r_{\phi }(i) = j \}} \\&=  \sum_{\phi \in \calF_{ij}}2^{n^-(\phi)} {\rm sign}(P_{ij})   =  w(\calF_{ij}).
\end{aligned}
\end{equation}
This computation confirms that $\widehat{w}_l(\calF)$ and $\widehat{w}_l(\calF_{ij})$ are indeed unbiased estimators for $w(\calF)$ and $w(\calF_{ij})$, respectively, which  completes the proof. \end{proof}


\subsection{Hoeffding's inequality}

 \begin{lemma}[Hoeffding's inequality~\cite{Ho94}]
Let $x_1,x_2,\cdots, x_l$ be $l$ independent random variables satisfying $a \leq x_i \leq b$ for all $i=1,2,\cdots,n$. Let $x=\frac{1}{l}\sum_{i=1}^l x_i$. Then for any $\epsilon>0$, $\mathbb{P}(|x-\mathbb{E}(x)| \ge \epsilon) \le 2 \, {\rm exp}\left(-\frac{2l \epsilon^2}{(b-a)^2}\right)$. 
\end{lemma}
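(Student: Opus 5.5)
The plan is the standard Chernoff-bound route. It proves the two-sided bound by handling the upper and lower deviations separately and adding them.

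\emph{Step 1: reduce to the upper tail.} Set $S=\sum_{i=1}^l x_i$ and $\mu=\mathbb{E}(S)$. For any $t>0$, Markov's inequality applied to $e^{t(S-\mu)}$ gives
\begin{equation*}
\mathbb{P}(x-\mathbb{E}(x)\ge\epsilon)=\mathbb{P}(S-\mu\ge l\epsilon)\le e^{-tl\epsilon}\,\mathbb{E}\bigl(e^{t(S-\mu)}\bigr).
\end{equation*}
By independence, the right-hand side factors as $e^{-tl\epsilon}\prod_{i=1}^l\mathbb{E}\bigl(e^{t(x_i-\mathbb{E}x_i)}\bigr)$.

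\emph{Step 2: Hoeffding's lemma (the main obstacle).} We need to show that a random variable $y$ with $\mathbb{E}(y)=0$ and $a'\le y\le b'$ satisfies $\mathbb{E}(e^{ty})\le e^{t^2(b'-a')^2/8}$. The plan is as follows.
\begin{itemize}
\item Use convexity of $e^{ty}$ to bound $e^{ty}\le\frac{b'-y}{b'-a'}e^{ta'}+\frac{y-a'}{b'-a'}e^{tb'}$.
\item Take expectations, using $\mathbb{E}(y)=0$.
\item Write the result as $e^{g(u)}$, where $u=t(b'-a')$, $p=-a'/(b'-a')$, and $g(u)=-pu+\log(1-p+pe^{u})$.
\item Check that $g(0)=g'(0)=0$ and $g''(u)\le 1/4$. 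The second-derivative bound holds because $g''$ has the form $s(1-s)$ with $s\in[0,1]$.
\item Conclude by Taylor's theorem that $g(u)\le u^2/8$.
\end{itemize}
Apply this with $y=x_i-\mathbb{E}x_i$, which lies in an interval of length $b-a$.

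\emph{Step 3: optimize $t$.} Steps 1 and 2 give the bound $\exp\bigl(-tl\epsilon+lt^2(b-a)^2/8\bigr)$. Minimizing over $t$ gives $t=4\epsilon/(b-a)^2$, and the bound becomes $\exp\bigl(-2l\epsilon^2/(b-a)^2\bigr)$.

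\emph{Step 4: lower tail and union bound.} Running the same argument on the variables $-x_i$, which lie in $[-b,-a]$ (again an interval of length $b-a$), gives the identical bound for $\mathbb{P}(x-\mathbb{E}(x)\le-\epsilon)$. A union bound over the two tails yields the factor $2$ and completes the stated inequality. The bounded-range assumption is what makes all moment generating functions finite, so no integrability issues arise.
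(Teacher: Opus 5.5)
Your proof is correct: it is the standard Chernoff-bound argument combined with Hoeffding's lemma (the convexity bound, then $g(0)=g'(0)=0$ and $g''\le 1/4$), which is essentially Hoeffding's original proof. The paper gives no proof of this lemma and only cites Hoeffding's result, so your argument follows the cited source's approach and there is nothing further to compare.
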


\subsection{Proof of Theorem~\ref{th-var}}

\begin{proof}
   In a balanced signed graph $\calG$, there are no negative cycles. Then we have  $\widehat{q}_{ij} =  \frac{1}{l} \sum_{k=1}^l   {\rm sign}(P_{ij}) \mathbb{I}_{\{ r_{\phi_{k}}(i) = j \}},$ $\widehat{q}_{ii} = \frac{1}{l}   \sum_{j=1}^l  \mathbb{I}_{\{i\in \calR(\phi_j)\}},$  $\widetilde{q}_{ii} = \frac{1}{1+d_i}(1+\sum_{k\neq i}\widehat{q}_{ik}w_{ki}).$ Since $\phi_1,\phi_2,\cdots,\phi_l$ are independently and uniformly sampled from the set $\calF$, the sample size $l$ does not influence the relative variances \( \widetilde{q}_{ii}\) and \( \widehat{q}_{ii}\). For simplicity, we assume $l=1$ for the remainder of this proof. Under this assumption, the variance of $\widehat{q}_{ii}$ is $ {\rm Var}(\widehat{q}_{ii}) = q_{ii} - q_{ii}^2$. 
    The variance of $\widetilde{q}_{ii}$ can be derived as follows:
    \begin{equation}\label{varvar}
    \begin{aligned}
    &\quad {\rm Var}(\widetilde{q}_{ii})= \mathbb{E}(\widetilde{q}_{ii}^2) - (\mathbb{E}(\widetilde{q}_{ii}))^2 = \frac{1}{(1+d_i)^2}\mathbb{E}(( 1+ \sum_{k\neq i}\widehat{q}_{ik}w_{ki})^2 ) - q_{ii}^2 
     \\ &= \frac{1}{(1+d_i)^2}\mathbb{E}(1+2\sum_{k\neq i}\widehat{q}_{ik}w_{ki}+ (\sum_{k\neq i}\widehat{q}_{ik}w_{ki})^2 ) - q_{ii}^2 
     \\ &= \frac{1+3\sum_{k\neq i}{q}_{ik}w_{ki}}{(1+d_i)^2}-q_{ii}^2  =  \frac{1+3((1+d_i)q_{ii}-1)}{(1+d_i)^2} - q_{ii}^2 \\ &=\frac{3q_{ii}}{1+d_i} - \frac{2}{(1+d_i)^2}  -q_{ii}^2.
    \end{aligned}
\end{equation}
The simplification uses the assumptions that $\mathbb{E}(\widehat{q}_{ik}\widehat{q}_{is}) = 0$ for any $k\neq s\neq i$ and $\mathbb{E}(\widehat{q}_{ik}^2)= |q_{ik}| ={q}_{ik}w_{ki} $ in balanced signed graphs.
Then we get the following equality:
\begin{equation}
        {\rm Var}\{\widehat{q}_{ii}\} - {\rm Var}\{\widetilde{q}_{ii}\}    =\frac{2(1-q_{ii})}{(1+d_i)^2}+ \frac{d_{i}(d_i-1)q_{ii}}{(1+d_i)^2}\geq 0.
\end{equation}
It shows that the variance of $\widetilde{q}_{ii}$  is no more than  the variance of the estimator $\widehat{q}_{ii}$, which completes the proof. In fact, Theorem~\ref{th-var} extends Lemma 6.1 from~\cite{SuZh24}, as the unsigned case can be viewed as a special case of balanced signed graphs.
\end{proof}

\subsection{Proof of Theorem~\ref{th-l}}

\begin{proof}
Setting $a = 0$ and $b = |\calF| \alpha$, and choosing $l$ as previously specified,  we can prove the inequalities~\eqref{ineq1} and~\eqref{ineq2} directly by utilizing Hoeffding's inequality. Assuming the above inequalities hold, the error in the estimated ratio can be bounded as follows: 
    \begin{equation}
        \begin{aligned}
             &\quad \left | \widehat{\qq}[i] - q_{ii} \right | =     \left | \frac{\widehat{w}_l(\calF_{ii})}{ \widehat{w}_l(\calF)} - \frac{w(\calF_{ii})}{w(\calF)}\right | \\& = \left | \frac{w(\calF_{ii})(\widehat{w}_l(\calF )-w(\calF)) + w(\calF) (w(\calF_{ii}) - \widehat{w}_l(\calF_{ii})  )  }{\widehat{w}_l(\calF)w(\calF)}\right | \\& \leq 
              \frac{w(\calF_{ii})|\widehat{w}_l(\calF )-w(\calF)|+ w(\calF) |w(\calF_{ii}) - \widehat{w}_l(\calF_{ii}) |  }{\widehat{w}_l(\calF)w(\calF)} \\ & \leq \frac{\frac{\epsilon \beta }{ 2+ \epsilon } (w(\calF_{ii}) + w(\calF))|\calF|}{\widehat{w}_l(\calF)w(\calF)} \leq \frac{\frac{2\epsilon\beta}{2+\epsilon}}{\beta-\frac{\epsilon\beta}{2+\epsilon}} = \epsilon,
        \end{aligned}
    \end{equation}
where the last inequality holds since $w(\calF_{ii}) \leq w(\calF)$, $w(\calF) = |\calF|\beta $ and $\widehat{w}_l(\calF) \geq w(\calF)-|\calF| \frac{\epsilon \beta }{ 2+ \epsilon }$, which finishes the proof.
\end{proof}

\section{Pseudocodes for Algorithms}

\subsection{Pseudocode for Algorithm GSCF}

\begin{algorithm}
\caption{$\textsc{GSCF}(\calG)$}
\label{alg-grf}
\begin{algorithmic}[1]
\STATE {\bfseries Input:} Signed graph $\calG=(V,E,w)$ with $|V|=n$
\STATE {\bfseries Output:} Generalized spanning converging forest $\phi$
\STATE {\bfseries Initialize:} $\phi \leftarrow \emptyset$; $V_\phi \leftarrow \emptyset$; $E_\phi \leftarrow \emptyset$
\FOR{$i=1,2,\cdots,n$}
    \STATE $u \leftarrow i$
    \STATE Create a branch $P \leftarrow \emptyset$
    \WHILE{$u \notin V_\phi$}
        \STATE ${\rm seed} \leftarrow \textsc{Rand}(0,1)$
        \IF{${\rm seed} \le \frac{1}{1+d_u}$}
            \STATE Mark $u$ as the root node
            \STATE {\bfseries break}  
        \ELSE
            \STATE Select a random neighbor $v \in N(u)$
            \STATE Add edge $(u,v)$ to $P$
            \IF{$P$ has a negative cycle $C$}
                \STATE {\bfseries break}  
            \ELSE
                \STATE $u \leftarrow v$
            \ENDIF
        \ENDIF
    \ENDWHILE
    \IF{$P$ has a negative cycle $C$}
        \STATE Partition $P$ into $P'$ and $C$
        \STATE Perform loop-erasure on $P'$ and obtain $P'_{\rm LE}$
        \STATE Add $P'_{\rm LE}$ and $C$ to $\phi$; update $V_\phi$ and $E_\phi$
    \ELSE
        \STATE Perform loop-erasure on $P$ and obtain $P_{\rm LE}$
        \STATE Add $P_{\rm LE}$ to $\phi$; update $V_\phi$ and $E_\phi$
    \ENDIF
\ENDFOR
\STATE {\bfseries return} $\phi$
\end{algorithmic}
\end{algorithm}

\newpage

\subsection{Pseudocode for Algorithm FMDE/FMDE+}

\begin{algorithm}[h!]
\caption{\textsc{FMDE/FMDE+}($\calG,l$)}
\label{alg-FMDE}
\begin{algorithmic}[1]
\STATE {\bfseries Input:} Signed graph $\calG$; sample number $l$
\STATE {\bfseries Output:} $\widehat{\qq}$ (\textsc{FMDE} estimator), $\widetilde{\qq}$ (\textsc{FMDE+} estimator)
\STATE {\bfseries Initialize:} $\widehat{\qq}[i]\leftarrow 0$, $\widetilde{\qq}[i]\leftarrow 0$ for $i=1,\ldots,n$; $\gamma \leftarrow 0$
\FOR{$t=1,2,\ldots,l$}
    \STATE $\phi \leftarrow \textsc{GSCF}(\calG)$
    \STATE $\gamma \leftarrow \gamma + 2^{n^{-}(\phi)}$
    \FOR{$i=1,2,\ldots,n$}
        \STATE $j \leftarrow r_{\phi}(i)$
        \IF{$j=i$}
            \STATE $\widehat{\qq}[i] \leftarrow \widehat{\qq}[i] + 2^{n^{-}(\phi)}$
        \ENDIF
        \IF{$j>0$ {\bfseries and} $i\in N(j)$}
            \STATE $\widetilde{\qq}[i] \leftarrow \widetilde{\qq}[i] + {\rm sign}(P_{ij})\, w_{ji}\, 2^{n^{-}(\phi)}$
        \ENDIF
    \ENDFOR
\ENDFOR
\STATE $\widehat{\qq} \leftarrow \widehat{\qq} / \gamma$
\FOR{$i=1,2,\ldots,n$}
    \STATE $\widetilde{\qq}[i] \leftarrow \frac{\widetilde{\qq}[i]}{\gamma(1+d_i)} + \frac{1}{1+d_i}$
\ENDFOR
\STATE {\bfseries return} $\widehat{\qq}, \widetilde{\qq}$
\end{algorithmic}
\end{algorithm}

\subsection{Pseudocode for Algorithm FJOE}
 
\begin{algorithm}[h!]
\caption{\textsc{FJOE}($L,i,\sss$)}
\label{alg:opinion-estimation}
\begin{algorithmic}[1]
\STATE {\bfseries Input:} 
List $L$ of $l$ generalized spanning converging forests; node index $i$; internal opinion vector $\sss$
\STATE {\bfseries Output:} Estimated expressed opinion $\widehat{z_i}$ for node $i$
\STATE {\bfseries Initialize:} $\widehat{z_i} \leftarrow 0$; $\gamma \leftarrow 0$
\FORALL{$\phi \in L$}
    \STATE $\eta \leftarrow 2^{n^{-}(\phi)}$
    \STATE $\gamma \leftarrow \gamma + \eta$
    \STATE $k \leftarrow r_{\phi}(i)$
    \IF{$k \neq 0$}
        \STATE $\widehat{z_i} \leftarrow \widehat{z_i} + {\rm sign}(P_{ik})\, \eta\, s_k$
    \ENDIF
\ENDFOR
\STATE $\widehat{z_i} \leftarrow \widehat{z_i} / \gamma$
\STATE {\bfseries return} $\widehat{z_i}$
\end{algorithmic}
\end{algorithm}

\section{Datasets and Equipment }

\subsection{Equipment and Implementation Details}

All experiments are conducted using the Julia programming language in a computational environment equipped with a 2.10 GHz Intel(R) Xeon(R) Platinum 8352V CPU and 256GB of primary memory. For all algorithms, the number of generalized spanning converging forests $l$ is set according to Theorem~\ref{th-l}, with parameters $\delta = 0.01$.  Since real networks usually contain very few negative edges, and in balanced signed graphs $\alpha / \beta = 1$. Since $\alpha$ and $\beta$ are difficult to compute exactly, we set $\alpha / \beta = 2$ as a conservative choice. Moreover, the bound in Theorem~\ref{th-l} is loose in practice; e.g., as shown in Figure 2, when $\epsilon= 0.3$, FMDE+ achieves an average relative error of about 0.01.  Given that our sampling algorithms can be parallelized efficiently, we use 72 computing cores to speed up the process.

\subsection{Datasets}
 The datasets of selected real networks are publicly available in the KONECT~\cite{Ku13} and SNAP~\cite{LeSo16}.  Our experiments are conducted on a diverse range of networks, with node counts ranging from 2,539 to over 23 million and edge counts from 12,969 to 112 million. Details of these datasets are presented in Table~\ref{datasets}, which includes six small graphs along with six medium and large-sized graphs. We utilize both original signed graphs and modified signed graphs for our experiments.  These  modified signed graphs are denoted with a superscript asterisk in Table~\ref{datasets}.
  
\begin{table}[htbp!]\fontsize{8}{11}\caption{Datasets used in experiments. }\label{datasets}\centering
\begin{tabular}{cccc}
\toprule
Type                                                                                    & Network         & Nodes      & Edges       \\ \midrule
\multirow{6}{*}{\begin{tabular}[c]{@{}c@{}}Small\\ Graphs\end{tabular}}                 & Adolescent$^*$   & 2,539      & 12,969      \\ & Bitcoinotc      & 5,881      & 35,592      \\    & Gnutella08$^*$    & 6,301      & 20,777      \\   & Wikielec   & 7,118      & 103,675     \\   & Wikipedia$^*$  & 17,649     & 296,918     \\   & SlashdotZoo   & 79,120     & 515,397     \\ \midrule
\multirow{6}{*}{\begin{tabular}[c]{@{}c@{}}Medium \\ and\\ Large\\ Graphs\end{tabular}} & Epinions        & 131,828    & 841,372     \\   & WikiL           & 258,259    & 3,187,096   \\   & Youtube$^*$         & 1,134,890  & 2,987,624   \\   & Dblp$^*$            & 5,624,219  & 12,282,055  \\   & Livejournal$^*$     & 7,489,073  & 112,307,315 \\   & FullUSA$^*$         & 23,947,300 & 57,708,600  \\ \bottomrule
\end{tabular}
\end{table}

\newpage

\end{document}